
\documentclass[preprint,3p,times,twocolumn,sort&compress]{elsarticle}




\usepackage{amssymb}
\usepackage{amsmath}


\usepackage{amsmath,amssymb,amsfonts}
\usepackage{algorithmic,algorithm}
\usepackage{graphicx}
\usepackage{textcomp}

\usepackage{amsthm,amssymb}
\usepackage{mathrsfs}
\usepackage{enumerate}
\usepackage{float}
\usepackage{color}
\usepackage{bm}
\usepackage{subcaption}
\usepackage{caption}
\usepackage{tikz}
\usetikzlibrary{arrows.meta, decorations.pathmorphing}
\usetikzlibrary{decorations.pathmorphing, patterns, arrows.meta}
\usetikzlibrary{decorations.pathmorphing, patterns, calc}
\usepackage{amsmath}

\usepackage{hyperref}
\usepackage{algorithm}
\usepackage{footnote}
\usepackage{multirow}
\usepackage{setspace}
\usepackage{balance} 

\makeatletter

\newcommand{\Rmnum}[1]{\expandafter\@slowromancap\romannumeral #1@}


\newtheorem{assumption}{Assumption}
\newtheorem{lemma}{Lemma}
\newtheorem{remark}{Remark}
\newtheorem{theorem}{Theorem}
\newtheorem{definition}{Definition}

\newtheorem{corollary}{Corollary}

\journal{Asian Journal of Control}

\begin{document}

\begin{frontmatter}



\title{Finite Sample Analysis of Subspace Identification for Stochastic Systems}


\author[1]{Shuai Sun\corref{cor1}%
}
\ead{suns19@mails.tsinghua.edu.cn}
\author[2]{Weikang Hu}
\ead{wk.hu@connect.ust.hk}
\author[3]{Xu Wang}
\ead{worm@mail.ustc.edu.cn}
\cortext[cor1]{Corresponding author}

\affiliation[1]{organization={Department of Automation, Tsinghua University},
city={Beijing},
postcode={100084},
country={China}}
\affiliation[2]{organization={Division of Emerging Interdisciplinary Areas, Hong Kong University of Science and Technology},
city={Hong Kong},
country={China}}
\affiliation[3]{organization={School of Computer Science and Technology, University of Science and Technology of China},
city={Hefei,Anhui},
postcode={230026},
country={China}}


\begin{abstract}
The subspace identification method (SIM) has become a widely adopted approach for the identification of discrete-time linear time-invariant (LTI) systems. In this paper, we derive finite sample high-probability error bounds for the system matrices $A,C$, the Kalman filter gain $K$ and the estimation of system poles. Specifically, we demonstrate that, ignoring the logarithmic factors, for an $n$-dimensional LTI system with no external inputs, the estimation error of these matrices decreases at a rate of at least $ \mathcal{O}(\sqrt{1/N}) $, while the estimation error of the system poles decays at a rate of at least $ \mathcal{O}(N^{-1/2n}) $, where $ N $ represents the number of sample trajectories. Furthermore, we reveal that achieving a constant estimation error requires a super-polynomial sample size in $n/m $, where $n/m$ denotes the state-to-output dimension ratio. Finally, numerical experiments are conducted to validate the non-asymptotic results.
\end{abstract}



\begin{keyword}


system identification, subspace identification method, finite sample error analysis
\end{keyword}

\end{frontmatter}



\section{Introduction}

Linear time-invariant (LTI) systems are an important class of models with broad applications in finance, biology, robotics, and other engineering fields~\cite{ljung1986system}. The identification of LTI state-space models from input/output sample data is one of the core problems in system analysis, design and control~\cite{michel1992}. Among the various approaches for system identification, subspace identification method (SIM)~\cite{deistler1995consistency} and the least-squares method~\cite{ljung1976consistency} are two widely used techniques. This paper focuses primarily on SIM due to the following advantages~\cite{michel1992}: \textbf{\romannumeral1)} simplified model selection, \textbf{\romannumeral2)} seamless extension from single-input single-output (SISO) to multiple-input multiple-output (MIMO) systems, \textbf{\romannumeral3)} numerical stability and computational efficiency, with no need for iterative optimization, making it ideal for large-scale data and online applications.


 The SIMs mainly involve extracting relevant subspaces from the input/output data matrix through projections or regressions and use them to recover the corresponding state-space model. A number of variants of SIMs have been published, such as canonical variate analyis (CVA)~\cite{larimore1990canonical}, numerical algorithms for state space system identification (N4SID)~\cite{van1994n4sid}, and multivariable output-error state space method (MOESP)~\cite{verhaegen1994identification}. Over the past three decades, SIMs have attracted significant research interest and achieved substantial development, and numerous meaningful progress~\cite{van1995unifying,ljung1996subspace,chiuso2005consistency,massioni2008subspace,haber2014subspace,sinquin2018k4sid,cox2021linear,kedia2022fast} has been made. Therefore, analyzing the identification error of SIMs has become of critical importance.

To date, numerous papers~\cite{deistler1995consistency,peternell1996statistical,viberg1997analysis,bauer1997analysis,jansson1998consistency,bauer1999consistency,bauer2000analysis,bauer2002some,bauer2005asymptotic,jansson2000asymptotic,knudsen2001consistency,chiuso2004asymptotic} have analyzed the estimation error of SIMs from an asymptotic perspective. These works primarily investigate the asymptotic properties of SIMs, including consistency~\cite{jansson1998consistency}, asymptotic normality~\cite{bauer1999consistency}, and variance analysis~\cite{jansson2000asymptotic}. However, in practical applications, only a finite amount of sample data is available. Therefore, deriving sharp error bounds under finite sample conditions is essential.


The LTI systems can generally be classified as either \emph{fully observed} or \emph{partially observed} systems, depending on whether direct state measurements are available. For \textit{fully observed} LTI systems, where the state can be accurately measured, some studies~\cite{faradonbeh2018finite,simchowitz2018learning,sarkar2019near,matni2019tutorial} have established finite-sample upper bounds on the identification error of the system matrices $(A,B)$. However, these studies primarily focus on the least-squares method framework rather than SIMs. The identification problem for \textit{partially observed} LTI systems is more challenging, as the state cannot be directly measured~\cite{zheng2021sample}. Recent studies~\cite{tsiamis2019finite, simchowitz2019learning, zheng2020non, oymak2021revisiting} have investigated non-asymptotic identification and provided convergence rates for the realization of the state-space model parameters $(A,B,C,D)$, up to a similarity transformation, with rates of $\mathcal{O}(1/\sqrt{T}) $ or $\mathcal{O}(1/\sqrt{N}) $, subject to logarithmic factors, \( T \) and \( N \) represent the length and the number of sample trajectories, respectively. These results rely on the assumption that the system is persistently excited by process noise or external inputs.



In this paper, we analyze the identification errors  for $n$-dimensional discrete-time LTI stochastic systems with $m$ outputs and no external input, using SIM with multiple trajectories of equal length. We provide end-to-end estimation guarantees for the system matrices $A$, $C$, the Kalman filter gain $K$, and the system poles. Similar to~\cite{tsiamis2019finite, faradonbeh2018finite, simchowitz2018learning, sarkar2019near, fattahi2019learning, oymak2021revisiting, simchowitz2019learning, weyer1999finite, campi2002finite, vidyasagar2008learning}, this paper focuses on data-independent bounds, specifically investigating how the identification error depends on the number of sample trajectories $N$, as well as the system and algorithm parameters. Furthermore, we examine the complexity of the sample size required to ensure, with high probability, that the estimation error remains within a constant order of magnitude.

The main distinctions between this paper and~\cite{tsiamis2019finite} are as follows: \textbf{\romannumeral1}) The tools used in this paper mainly involve random matrix theory and matrix analysis, without relying on self-normalized martingales as employed in~\cite{tsiamis2019finite}. This difference simplifies the analysis in the present work. 
\textbf{\romannumeral2}) In addition to the system matrices $A$, $C$, the Kalman filter gain $K$, we also analyze the finite sample identification error of the system poles, which is not addressed in~\cite{tsiamis2019finite}. 
\textbf{\romannumeral3}) Leveraging singular value theory, we analyze the complexity of the sample size required to ensure, with high probability, that the estimation error remains within a constant order of magnitude. This aspect is not considered in~\cite{tsiamis2019finite}.

Our main contributions are summarized as follows: 
\begin{itemize}
	\item We derive finite sample high-probability error bounds for the system matrices $A,C$, the Kalman filter gain $K$ and the estimation of system poles. Specifically, we demonstrate that, ignoring the logarithmic factors, the error of these matrices decreases at a rate of at least $ \mathcal{O}(\sqrt{1/N}) $ with $N$ independent sample trajectories, while the error of the system poles decays at a rate of at least $ \mathcal{O}(N^{-1/2n}) $.

	\item We reveal that achieving a constant estimation error requires a super-polynomial sample size in $n/m $, where $n/m$ denotes the state-to-output dimension ratio.

\end{itemize}

This paper is organized as follows.
Section~\ref{sec:problem setup} formulates the identification problem and reviews the SIM. Section~\ref{sec:system matrix} and Section~\ref{sec:system pole} analyze the finite sample identification error of the system matrices and the system poles, respectively.
Finally, Section~\ref{sec:numerical results} provides numerical simulations.

\textbf{Notations}: 
$\mathbf{0}$ is an all-zero matrix. For any $ x \in \mathbb{R} $, $ \lfloor x \rfloor $ denotes the largest integer not exceeding $ x $, and $ \lceil x \rceil $ denotes the smallest integer not less than $ x $.
The Frobenius norm is denoted by $ \|A\|_{\rm F} $, and $ \|A\| $ is the spectral norm of $A$, i.e., its largest singular value $ \sigma_{\max}(A) $. 
Its $ j $-th largest singular value is denoted by $ \sigma_{j}(A) $, and its smallest non-zero singular value  is denoted by $ \sigma_{\min}(A) $.
 The spectrum of a square matrix $ A $ is denoted by $ \lambda(A) $.
The Moore-Penrose inverse of matrix $ A $ is denoted by $ A^\dagger $. 
The $ n \times n $ identity matrix is denoted by $\mathbf{I}_n$, and the identity matrix of appropriate dimension is denoted by $\mathbf{I}$.
The Kronecker product is denoted by $\otimes$.
The matrix inequality $ A \succ B$ implies that matrix $ A -B $ is positive-definite. 
Multivariate Gaussian distribution with mean $ \mu $ and covariance $ \Sigma $ is denoted by $ \mathcal{N}(\mu,\Sigma) $. 
The big-$\mathcal{O}$ notation $\mathcal{O}(f(n))$ represents a function that grows at most as fast as $f(n)$.
The big-$\Omega$ notation $\Omega(f(n))$ represents a function that grows at least as fast as $f(n)$.
The big-$\Theta$ notation $\Theta(f(n))$ represents a function that grows as fast as $f(n)$.

In order to maintain the flow of the main text, the proofs of all lemmas and theorems discussed in this paper are provided in the appendix.

\section{Problem Setup}
\label{sec:problem setup}
We consider the identification problem of the MIMO  LTI system of order $n$ evolving according to
\begin{equation}\label{linear_system}
	\begin{aligned}
		x_{k+1} &= A x_{k}+ 
		w_{k}, \\
		y_{k} &= 
		C x_{k}+ v_{k},
	\end{aligned}
    \tag{\textsf{LTI-system}}
\end{equation}
based on finite output sample data, where $x_k \in \mathbb{R}^n$  and $y_k \in \mathbb{R}^m$ are the system state and the output, respectively, and the process noise $w_k \sim \mathcal{N}(\mathbf{0}, \mathcal{Q})$ and the measurement noise $v_k \sim \mathcal{N}(\mathbf{0}, \mathcal{R})$ are i.i.d. across time $k$, with $\mathcal{Q}, \mathcal{R} \succ \mathbf{0}$. The process noise and the measurement noise are mutually independent. $A \in \mathbb{R}^{n\times n}$ and $C \in \mathbb{R}^{m\times n}$  are \textbf{unknown} matrices.


\begin{assumption}\label{observable and controllable}
	\begin{enumerate}
	    \item The pair $(A,C)$ is observable, and the pair $(A, \mathcal{Q}^{1/2})$ is controllable.
        \item The system matrix $A$ is (marginally) stable, with real, distinct eigenvalues lying in the interval $[-1, 1]$.
	\end{enumerate}
\end{assumption}

\begin{remark}
	Under Assumption~\ref{observable and controllable}, the~\eqref{linear_system} is minimal: the realization $(A,C)$ has the smallest dimension among all state-space models with the same input–output relationship. Note that only the observable part of the system can be identified, and the controllability assumption ensures that every mode is excited by the process noise $w_k$. Therefore, Assumption \ref{observable and controllable} is necessary and well-defined.
\end{remark}

According to~\cite{qin2006overview}, if the pair $(A,C)$ is observable, a steady-state kalman filter can be designed for the~\eqref{linear_system} as
\begin{equation}\label{kalman filter}
	\hat{x}_{k+1} = A\hat{x} + K(y_k - C\hat{x}_k),
\end{equation}
where the steady-state kalman gain $K \in \mathbb{R}^{n \times m}$ is given by
\begin{equation}
    K  = APC^\top (CPC^\top + \mathcal{R})^{-1},
\end{equation}
and $P$ is the positive definite solution of the algebra Riccati equation
\begin{equation}
	P = APA^\top + \mathcal{Q} - APC^\top  (CPC^\top + \mathcal{R})^{-1} CP A^\top.
\end{equation}
Define the innovation of the Kalman filter as
\begin{equation}\label{e_k}
	e_k = y_k - C\hat{x}_k,
\end{equation}
one can obtain the equivalent innovation form of the~\eqref{linear_system}:
\begin{equation}\label{innovation form}
	\begin{aligned}
		\hat{x}_{k+1} &= A \hat{x}_{k}+ Ke_k, \\
		y_{k} &= C \hat{x}_{k}+ e_{k},
	\end{aligned}
    \tag{\textsf{innovation form}}
\end{equation}
where the innovation $e_k \in \mathbb{R}^m$ is the zero mean Gaussian with covariance matrix $\mathcal{S} = CPC^\top + \mathcal{R}$ and independent of past output.

\begin{assumption}\label{has converged to its steady state}
    The~\eqref{innovation form} has converged\footnote{Since the Kalman gain converges exponentially to its steady-state value, this assumption is reasonable.} to its steady state, i.e., $\hat{x}_0 \sim \mathcal{N}(\mathbf{0},P)$.
\end{assumption}

\begin{assumption}\label{initial state and stable}
	The order $n$ of the~\eqref{innovation form} is known.
\end{assumption}

The goal of this paper is to establish a \textbf{finite sample analysis} for subspace identification of $ A,C$ and $K$, up to within a similarity transformation, using multiple sample trajectories
\begin{equation}
	\left\{y_k^{(i)} \mid 0 \leq k \leq 2T-1, 1 \leq i \leq N\right\},\
\end{equation}
where \( 2T \) ($T \geq n$) and \( N \) represent the length and number of sample trajectories, respectively.



\subsection{Subspace identification method}
\label{sec:sim}

First, we define some notations. For each sample trajectory $i$, $i = 1,2,\cdots,N$, define the past outputs $y_p^{(i)} \in \mathbb{R}^{Tm}$ and future outputs $y_f^{(i)} \in \mathbb{R}^{Tm}$ as
\begin{equation}
	y_p^{(i)} = \begin{bmatrix}
		y_0^{(i)} \\ \vdots \\ y_{T-1}^{(i)}
	\end{bmatrix}, \qquad
	y_f^{(i)} = \begin{bmatrix}
		y_T^{(i)} \\ \vdots \\ y_{2T-1}^{(i)}
	\end{bmatrix}.
\end{equation}
Define the batch past outputs $Y_p \in \mathbb{R}^{Tm \times N}$ and batch future outputs $Y_f \in \mathbb{R}^{Tm \times N}$ as
\begin{equation}
	Y_p \triangleq \begin{bmatrix}
		y_p^{(1)} & \cdots & y_p^{(N)}
	\end{bmatrix}, \quad
	Y_f \triangleq \begin{bmatrix}
		y_f^{(1)} & \cdots & y_f^{(N)}
	\end{bmatrix}.
\end{equation}
Similarly, we can define the past noises $e_p^{(i)} \in \mathbb{R}^{Tm}$, the future noises $e_f^{(i)} \in \mathbb{R}^{Tm}$, the batch past noises $E_p \in \mathbb{R}^{Tm \times N}$ and the batch future noises $E_f \in \mathbb{R}^{Tm \times N}$. The batch states is defined as
\begin{equation}\label{the state}
	\widehat{X} \triangleq \begin{bmatrix}
		\hat{x}_0^{(i)} & \cdots & \hat{x}_0^{(N)}
	\end{bmatrix} \in \mathbb{R}^{n \times N}.
\end{equation}
The (extended) observability matrix $\Gamma_T \in \mathbb{R}^{Tm \times n}$, the reversed (extended) controllability matrix $\mathcal{K}_T \in \mathbb{R}^{n \times Tm}$, the block Hankel matrix $\mathcal{H}_T  \in \mathbb{R}^{Tm \times Tm}$ and the block Toeplitz matrix $\mathcal{J}_T \in \mathbb{R}^{Tm \times Tm}$ associated to the~\eqref{innovation form} are defined as
\begin{align}\label{observability matrix}
    \Gamma_T &\triangleq \begin{bmatrix}
		C^\top & (CA)^\top & \cdots & (CA^{T-1})^\top
	\end{bmatrix}^\top, \\ \label{controllability matrix}
    \mathcal{K}_T &\triangleq \begin{bmatrix}
		(A-KC)^{T-1}K  & \cdots & K
	\end{bmatrix}, \\ \label{hankel matrix}
    \mathcal{H}_T &\triangleq \Gamma_T\mathcal{K}_T, \\ \label{toeplitz}
    \mathcal{J}_T &\triangleq \begin{bmatrix}
		\mathbf{I}_m & \\
		CK & \mathbf{I}_m & \\
		\vdots & \vdots & \ddots & \\
		CA^{T-2}K & CA^{T-3}K & \cdots & \mathbf{I}_m
	\end{bmatrix}.
\end{align}


For each sample trajectory $i$, $i = 1,2,\cdots,N$, the future outputs can be written as
\begin{equation}\label{regression_1}
	y_f^{(i)} = \Gamma_T \hat{x}_T^{(i)} + \mathcal{J}_T e_f^{(i)}.
\end{equation}
On the other hand, by iterating~\eqref{kalman filter}, it is straightforward to derive that
\begin{equation}\label{regression_2}
	\hat{x}_T^{(i)} = A_C^T \hat{x}_0^{(i)} + \mathcal{K}_T y_p^{(i)}.
\end{equation}
where $A_C \triangleq A-KC$\footnote{For the sake of convenience and clarity, we will henceforth use $A_C$ to denote $A-KC$.}.
Substituting~\eqref{regression_2} into~\eqref{regression_1} and stacking all trajectories yields
\begin{equation}\label{the dynamic response}
	Y_f = \mathcal{H}_T Y_p +  \mathcal{J}_T E_f +  \Gamma_T (A-KC)^T \widehat{X}.
\end{equation}

We summarize the main steps of SIM described in~\cite{knudsen2001consistency} below.

\textbf{Step 1}: Estimate $\mathcal{H}_T$ by solving the least-squares problem:
\begin{equation}\label{regression problem}
	\widehat{\mathcal{H}}_T = \arg \min_{\mathcal{H} \in \mathbb{R}^{Tm \times Tm}} \| Y_f -  \mathcal{H} Y_p \|^2_{\rm F},
\end{equation}
with closed-form solution \( \widehat{\mathcal{H}}_T = Y_f Y_p^\dagger \), where \( Y_p^\dagger = Y_p^\top (Y_pY_p^\top)^{-1} \) is the right pseudo-inverse of \( Y_p \). Denote $\widehat{\mathcal{H}}_{T,n} \in \mathbb{R}^{Tm \times Tm}$ as the best rank-$n$ approximation of $\widehat{\mathcal{H}}_T$.


\textbf{Step 2}: Perform SVD decomposition on $\widehat{\mathcal{H}}_{T,n}$:
\begin{equation}\label{svd}
	\widehat{\mathcal{H}}_{T,n} = \mathcal U_1 S_1 \mathcal V_1^{\rm H} 
\end{equation}
where \( S_1 \in \mathbb{R}^{n \times n} \) is a diagonal matrix, with its elements being the $n$ singular values in descending order. $\mathcal{U}_1$ and $\mathcal{V}_1$ are matrices composed of corresponding left and right singular vectors, respectively. 

\textbf{Step 3}: Estimate the (extended) observability and controllability matrices:
\begin{equation}
	\widehat{\Gamma}_T = \mathcal U_1 S_1^{1/2} \mathcal{T}, \quad \widehat{\mathcal{K}}_T = \mathcal{T}^{-1} S_1^{1/2} \mathcal V_1^\top,
\end{equation}
where $\mathcal{T} \in \mathbb{R}^{n \times n}$ is an arbitrary non-singular matrix representing a similarity transformation.

\textbf{Step 4}: Recover the estimated system matrices:
\begin{equation}
	A = \widehat{\Gamma}_{T,p}^\dagger \widehat{\Gamma}_{T,f},  C = \widehat{\Gamma}_T(1:m,:),
	K = \widehat{\mathcal{K}}_T(:,-m:-1),
\end{equation}
where $ \widehat{\Gamma}_{T,p}$ and $\widehat{\Gamma}_{T,f}$ are the sub-matrices of $\widehat{\Gamma}_T$ discarding the last $m$ rows and the first $m$ rows, respectively, and $\widehat{\Gamma}_T(1:m,:)$ denotes the first $m$ rows of $\widehat{\Gamma}_T$, and $\widehat{\mathcal{K}}_T(:,-m:-1)$ denotes the last $m$ columns of $\widehat{\mathcal{K}}_T$. 

\begin{remark}
    Since the matrices $A$, $C$, and $K$ can only be identified up to a similarity transformation, we can set \( \mathcal{T} \) as the identity matrix.
\end{remark}

\section{Finite Sample Analysis of System Matrices}
\label{sec:system matrix}

In this section, we provide finite sample high-probability upper bounds for the identification error on the system matrices $A$, $C$, and $K$, based on multiple sample trajectories. Furthermore, we analyze the sample complexity required to achieve a constant estimation error.

First, we focus on analyzing the estimation error of the block Hankel matrix $\mathcal{H}_T$. 
\begin{remark}
    According to \eqref{the dynamic response} and \eqref{regression problem}, the estimation error $\widehat{\mathcal{H}}_T - \mathcal{H}_T$ is given by:
\begin{equation}\label{error_1}
	\widehat{\mathcal{H}}_T - \mathcal{H}_T = \underbrace{\mathcal{J}_TE_f Y_p^\top (Y_pY_p^\top)^{-1}}_{\text{Cross term}} + \underbrace{\Gamma_T A_C^T \widehat{X} Y_p^\top (Y_pY_p^\top)^{-1}}_{\text{Kalman filter truncation bias term}},
\end{equation}
where the first term corresponds to the cross term error, while the second term represents the truncation bias introduced by the Kalman filter. These errors arise from the noise $e_k$ and the bias in the estimated state, which results from using only $T$ past outputs instead of all the outputs.
\end{remark}

\begin{theorem}\label{estimate error of hankel matrix}
	For the~\eqref{innovation form}, under the conditions of Assumptions \ref{observable and controllable}, \ref{has converged to its steady state} and \ref{initial state and stable}, let $\widehat{\mathcal{H}}_T$ be the estimate~\eqref{regression problem} using multiple sample trajectories $ \left\{y_k^{(i)} \mid 0 \leq k \leq 2T-1, 1 \leq i \leq N\right\}$, and let $\mathcal{H}_T$ be as in Section~\ref{sec:sim}. 
    For a failure probability $\delta \in (0,1)$, if $N \geq (6+4\sqrt{2})(\sqrt{Tm} + \sqrt{2\log(1/\delta)})^2$, then with probability at least $1-5\delta$, the following upper bound holds:
	\begin{equation}\label{upper bound of estimation error of hankel matrix}
		\|\widehat{\mathcal{H}}_T - \mathcal{H}_T\| \leq \frac{2}{ \lambda_{\min}(\mathcal{R})}  \left( \mathcal{C}_1 \mathcal{C}_2 + \sqrt{N}\|A_C^T\| \mathcal{C}_3 \right) \sqrt{\frac{T^5}{N}},
	\end{equation}
	where $\overline{c} \triangleq \max_{i,j} |c_{ij}|$ and $\overline{k} \triangleq \max_{i,j} |k_{ij}|$ denote the maximum absolute values of the entries of matrices $C$ and $K$, respectively. The constants $\mathcal{C}_1$, $\mathcal{C}_2$ and $\mathcal{C}_3$ are given by
    \begin{align} \label{c_1}
        \mathcal{C}_1 &= 4 \sqrt{2\log (9/\delta)} m^{3/2}(1+\overline{c}\overline{k}n) \|\mathcal{S}\|^{1/2},\\ \label{c_2}
        \mathcal{C}_2 &=  \sqrt{\frac{mn}{T}}\overline{c} \| P \|^{1/2}  + m(1+\overline{c}\overline{k}n)\| \mathcal{S} \|^{1/2}, \\ \label{c_3}
        \mathcal{C}_3 &= \sqrt{mn}\overline{c}\left(\frac{\sqrt{3}}{T^2} \| P \|  + \frac{\mathcal{C}_1}{\sqrt{TN}}  \| P \|^{1/2} \right).
    \end{align}
\end{theorem}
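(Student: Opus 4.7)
The plan is to bound the two terms in the right-hand side of \eqref{error_1} separately and combine them via the triangle inequality. The common ingredient is a high-probability lower bound on $\sigma_{\min}(Y_p)$, since $\|Y_p^\top(Y_p Y_p^\top)^{-1}\| = 1/\sigma_{\min}(Y_p)$. The columns of $Y_p$ are i.i.d. centered Gaussian with covariance $\Sigma_p$; decomposing $y_p^{(i)}$ into the initial-state/process-noise contribution plus the stacked measurement noise $v_p^{(i)} \sim \mathcal{N}(\mathbf{0}, \mathbf{I}_T \otimes \mathcal{R})$ (which is independent of everything else) gives $\Sigma_p \succeq \mathbf{I}_T \otimes \mathcal{R}$, so $\lambda_{\min}(\Sigma_p) \geq \lambda_{\min}(\mathcal{R})$. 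Writing $Y_p \stackrel{d}{=} \Sigma_p^{1/2} Z$ for a standard Gaussian $Z \in \mathbb{R}^{Tm\times N}$ and applying the Davidson--Szarek bound yields $\sigma_{\min}(Y_p) \geq \sqrt{\lambda_{\min}(\mathcal{R})}\bigl(\sqrt{N}-\sqrt{Tm}-\sqrt{2\log(1/\delta)}\bigr)$ with probability at least $1-\delta$. The sample-size hypothesis $N \geq (6+4\sqrt{2})(\sqrt{Tm}+\sqrt{2\log(1/\delta)})^2$ is exactly what is needed to push this lower bound to $\sigma_{\min}(Y_p)^2 \geq N\lambda_{\min}(\mathcal{R})/2$.

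For the cross term $\mathcal{J}_T E_f Y_p^\top(Y_p Y_p^\top)^{-1}$, the decisive observation is that $E_f$ is independent of $Y_p$: in the \eqref{innovation form}, $Y_p$ is measurable with respect to $\hat{x}_0$ and $e_0,\dots,e_{T-1}$, while $E_f$ depends only on $e_T,\dots,e_{2T-1}$. Conditioning on $Y_p$ and using its SVD $Y_p = U\Sigma V^\top$, Gaussian rotational invariance implies that $E_f V$ has i.i.d. columns with covariance $\mathbf{I}_T \otimes \mathcal{S}$, so a standard Gaussian spectral-norm tail bound combined with the previously established lower bound on $\sigma_{\min}(Y_p)$ gives $\|E_f Y_p^\top (Y_p Y_p^\top)^{-1}\| \lesssim \sqrt{Tm}\,\|\mathcal{S}\|^{1/2}/\sqrt{N\lambda_{\min}(\mathcal{R})}$ with high probability. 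Composing with a deterministic operator-norm estimate on the block lower-triangular Toeplitz matrix $\mathcal{J}_T$, obtained via Cauchy--Schwarz across the block structure together with the entrywise bound $\|CA^jK\| \lesssim m\,\overline{c}\,\overline{k}\,n$ (which uses Assumption~\ref{observable and controllable} to keep $\|A^j\|$ controlled by eigenvalue location), produces the $\mathcal{C}_1 \mathcal{C}_2 \sqrt{T^5/N}/\lambda_{\min}(\mathcal{R})$ contribution.

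For the bias term $\Gamma_T A_C^T \widehat{X} Y_p^\top(Y_p Y_p^\top)^{-1}$, I would bound $\|\widehat{X}\|$ by noting that, under Assumption~\ref{has converged to its steady state}, its columns are i.i.d. $\mathcal{N}(\mathbf{0}, P)$; another application of Davidson--Szarek yields $\|\widehat{X}\| \leq \|P\|^{1/2}\bigl(\sqrt{N}+\sqrt{n}+\sqrt{2\log(1/\delta)}\bigr)$ with probability at least $1-\delta$. Combining this with the deterministic bound $\|\Gamma_T\| \lesssim \sqrt{Tmn}\,\overline{c}$ (again using entrywise estimates on $CA^j$), the treatment of $\|A_C^T\|$ as a data-independent factor, and $\|Y_p^\top(Y_p Y_p^\top)^{-1}\| \leq 1/\sigma_{\min}(Y_p)$ produces the $\sqrt{N}\,\|A_C^T\|\mathcal{C}_3$ term. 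Importantly, although $\widehat{X}$ and $Y_p$ are statistically dependent (the initial states help generate the past outputs), this is harmless here because only spectral-norm product bounds are required: no independence is invoked, only a union bound over the good events.

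A union bound absorbs the five separate high-probability events (lower bound on $\sigma_{\min}(Y_p)$, two Gaussian spectral-norm bounds, and the tail events used for $\|\widehat{X}\|$ and the conditional Gaussian bound on $E_f V$) into the claimed overall failure probability $5\delta$, and the triangle inequality applied to \eqref{error_1} then completes the argument. The random-matrix ingredients themselves are standard once the independence between $E_f$ and $Y_p$ has been isolated; I expect the main technical obstacle to be bookkeeping, namely making the deterministic block-Toeplitz and block-Hankel bounds sharp enough that the final exponents of $T$, $m$ and $n$ agree with the statement, and choosing the cleanest split of the cross term so that the $\|P\|^{1/2}$ and $\|\mathcal{S}\|^{1/2}$ pieces appear exactly as in the two summands of $\mathcal{C}_2$.
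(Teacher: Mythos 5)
Your proposal is correct in substance but follows a genuinely different route from the paper's. The one ingredient you share exactly is the persistence-of-excitation step: both arguments lower-bound $\lambda_{\min}(\Sigma_y)$ by $\lambda_{\min}(\mathcal{R})$ via the primal-form decomposition $y_p^{(i)} = \Gamma_T x_0^{(i)} + \mathcal{M}_T w_p^{(i)} + v_p^{(i)}$ and then invoke a Davidson--Szarek-type bound to get $Y_pY_p^\top \succeq \tfrac{N}{2}\lambda_{\min}(\mathcal{R})\,\mathbf{I}$, with the sample-size hypothesis calibrated exactly as you say. After that the paths diverge: the paper never isolates $\sigma_{\min}(Y_p)$ or conditions on $Y_p$; instead it splits $\|M Y_p^\top (Y_pY_p^\top)^{-1}\| \le \|M Y_p^\top\|\cdot\|(Y_pY_p^\top)^{-1}\|$, substitutes $Y_p = \Gamma_T\widehat{X} + \mathcal{J}_T E_p$ into both $E_f Y_p^\top$ and $\widehat{X}Y_p^\top$, and bounds each of the four resulting sums with the cross-product concentration lemma of Matni--Tu (Lemma~\ref{xw}) or Corollary~\ref{two-sided bound}, using only the pairwise independence of $e_f^{(i)}$, $e_p^{(i)}$, $\hat{x}_0^{(i)}$; this uniform treatment is what generates the specific constants $\mathcal{C}_1,\mathcal{C}_2,\mathcal{C}_3$ and the $2/\lambda_{\min}(\mathcal{R})$ prefactor. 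Your version keeps the pseudo-inverse whole via $\|Y_p^\dagger\| = 1/\sigma_{\min}(Y_p)$, handles the cross term by conditioning on $Y_p$ and rotational invariance of $E_f$ (which is valid: $E_f$ depends only on $e_T,\dots,e_{2T-1}$ and is independent of $Y_p$), and bounds $\|\widehat{X}\|$ directly, correctly noting that the $\widehat{X}$--$Y_p$ dependence is immaterial once only norms are multiplied. This buys a bound scaling as $1/\sqrt{\lambda_{\min}(\mathcal{R})}$ rather than $1/\lambda_{\min}(\mathcal{R})$ and with lower powers of $T$, $m$ and $\|\mathcal{S}\|$; since $\|\mathcal{S}\| \ge \lambda_{\min}(\mathcal{R})$ and the omitted polynomial factors exceed one, your bound dominates the stated one, so the theorem follows (after the constant-tracking you flag). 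What you give up is the uniformity of the paper's bookkeeping: to land on \eqref{upper bound of estimation error of hankel matrix} verbatim you would need an extra comparison step, whereas the paper's decomposition produces those constants directly.
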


Note that the spectral radius of the matrix $A_C = A - KC$ is strictly less than one. As a result, the second term in~\eqref{upper bound of estimation error of hankel matrix}, which corresponds to the truncation bias introduced by the Kalman filter, decays exponentially with respect to $T$. In this regard,the leading contribution to the estimation error stems from the first term, i.e., the cross term error.

To balance the decay rates of the cross term and the truncation bias term, we set $T \sim \Theta (\log N)$. Under this setting, the overall estimation error of the block Hankel matrix $\mathcal{H}_T$ decays at the rate of at least $\mathcal{O}\left( \frac{(\log N)^{5/2}}{\sqrt{N}} \right)$.

\begin{remark}
	It is worth noting that in the absence of input, the noise both helps and hinders identification. The larger the noise, the better the excitation effect of the output, but it also reduces the convergence of the estimator of the block Hankel matrix $\mathcal{H}_T$. To understand how the non-asymptotic bound~\eqref{upper bound of estimation error of hankel matrix} captures this, observe that as the noise increases, $\|\mathcal{S}\|$ becomes larger, but $\lambda_{\min}(\mathcal{R})$ also increases. 
\end{remark}


Second, we shows the robustness of the SIM to adversarial disturbances that may occur in the block Hankel matrix $\mathcal{H}_T$. 

\begin{lemma}\label{the robustness of balanced realizations}
	For the~\eqref{innovation form}, under the conditions of Assumptions \ref{observable and controllable}, \ref{has converged to its steady state} and \ref{initial state and stable}, let $\overline{A}, \overline{C}$, $\overline{K}$ be the state-space realization corresponding to the output of SIM with $\mathcal{H}_T$ and $\widehat{A}, \widehat{C}$, $\widehat{K}$ be the state-space realization corresponding to the output of SIM with $\widehat{\mathcal{H}}_T$.
    Suppose $\sigma_{n}(\mathcal{H}_T) > 0$ and perturbation obeys
	\begin{equation}\label{perturbation condition}
		\|\widehat{\mathcal{H}}_T - \mathcal{H}_T\|  \leq \frac{\sigma_{n}(\mathcal{H}_T)}{4},
	\end{equation}
	then there exists a unitary matrix $\mathcal{U} \in \mathbb{R}^{n \times n}$ such that
    	\begin{multline}
		\max\left\{\| \widehat{C} - \overline{C} \mathcal{U} \| ,	\| \widehat{K} -\mathcal{U}^\top \overline{K} \| \right\} \\
        \leq \sqrt{\frac{39n}{\sigma_{n}(\mathcal{H}_T)}} \| \mathcal{H}_T - \widehat{\mathcal{H}}_{T} \|.
	\end{multline}
	Furthermore, the matrices $\overline{A}$, $\widehat{A}$ satisfy
	\begin{equation}\label{the upper bound of A}
		\| \widehat{A} - \mathcal{U}^\top \overline{A} \mathcal{U} \|  \leq \left(\sqrt{2} + 2 \sqrt{
        \frac{\overline{c}\overline{k}Tmn}{\sigma_{n}(\mathcal{H}_T)}
        }\right) \frac{\sqrt{39n}}{\sigma_{n}(\mathcal{H}_T)}\|\mathcal{H}_T - \widehat{\mathcal{H}}_{T} \|.
	\end{equation}
\end{lemma}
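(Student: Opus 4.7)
The plan is to treat Lemma~\ref{the robustness of balanced realizations} as a matrix-perturbation statement about the balanced factorizations of two nearby rank-$n$ matrices, combining three classical tools: Weyl/Mirsky inequalities for singular values, a Wedin-type $\sin\Theta$ bound paired with a matrix square-root Lipschitz step for the balanced factors, and a standard pseudoinverse perturbation bound for the $A$-step. I first set up rank stability. Because $\mathcal{H}_T$ itself has rank $n$ and $\widehat{\mathcal{H}}_{T,n}$ is the best rank-$n$ approximation of $\widehat{\mathcal{H}}_T$, the triangle inequality gives $\|\widehat{\mathcal{H}}_{T,n}-\mathcal{H}_T\|\leq 2\|\widehat{\mathcal{H}}_T-\mathcal{H}_T\|\leq\sigma_n(\mathcal{H}_T)/2$ under (\ref{perturbation condition}), and Weyl's inequality forces $\sigma_n(\widehat{\mathcal{H}}_{T,n})\geq\sigma_n(\mathcal{H}_T)/2$. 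Writing the SVDs $\mathcal{H}_T=\mathcal{U}_1 S_1 \mathcal{V}_1^\top$ and $\widehat{\mathcal{H}}_{T,n}=\widehat{\mathcal{U}}_1 \widehat{S}_1 \widehat{\mathcal{V}}_1^\top$, the algorithm (with $\mathcal{T}=\mathbf{I}$) produces the balanced realizations $\overline{\Gamma}_T=\mathcal{U}_1 S_1^{1/2}$, $\overline{\mathcal{K}}_T=S_1^{1/2}\mathcal{V}_1^\top$ and the analogous hatted quantities.

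Next, I apply a Wedin-type bound for balanced square-root factorizations to exhibit a single orthogonal $\mathcal{U}\in\mathbb{R}^{n\times n}$ --- the polar factor in the alignment of $\widehat{\mathcal{U}}_1$ with $\mathcal{U}_1$, which by the symmetry of the balancing also aligns $\widehat{\mathcal{V}}_1$ with $\mathcal{V}_1$ --- satisfying
\begin{equation*}
\max\bigl\{\|\widehat{\Gamma}_T-\overline{\Gamma}_T\mathcal{U}\|,\ \|\widehat{\mathcal{K}}_T-\mathcal{U}^\top\overline{\mathcal{K}}_T\|\bigr\} \leq \sqrt{\frac{39n}{\sigma_n(\mathcal{H}_T)}}\,\|\mathcal{H}_T-\widehat{\mathcal{H}}_T\|.
\end{equation*}
The factor $\sqrt{n}$ appears because the Lipschitz step for the matrix square root on positive matrices with eigenvalues bounded below by $\sigma_n(\mathcal{H}_T)/2$ inflates spectral to Frobenius norm by $\sqrt{n}$; the constant $39$ absorbs Wedin's gap constant and the factor of $2$ from the first paragraph. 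Since $\widehat{C}$ and $\widehat{K}$ are, respectively, the top $m$-row block of $\widehat{\Gamma}_T$ and the last $m$-column block of $\widehat{\mathcal{K}}_T$, and submatrix extraction is non-expansive in spectral norm, the claimed bounds on $\widehat{C}-\overline{C}\mathcal{U}$ and $\widehat{K}-\mathcal{U}^\top\overline{K}$ follow immediately.

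For $A$ I use the identity
\begin{equation*}
\widehat{A} - \mathcal{U}^\top \overline{A}\mathcal{U} = \widehat{\Gamma}_{T,p}^\dagger(\widehat{\Gamma}_{T,f}-\overline{\Gamma}_{T,f}\mathcal{U}) + (\widehat{\Gamma}_{T,p}^\dagger-\mathcal{U}^\top\overline{\Gamma}_{T,p}^\dagger)\overline{\Gamma}_{T,f}\mathcal{U}.
\end{equation*}
The first summand is bounded by $\|\widehat{\Gamma}_{T,p}^\dagger\|\leq\sqrt{2/\sigma_n(\mathcal{H}_T)}$ (from $\sigma_n(\widehat{\mathcal{H}}_{T,n})\geq\sigma_n(\mathcal{H}_T)/2$ and the row-deletion stability of $\sigma_n$ under the rank-$n$ observability structure) times the previous paragraph's bound, yielding the $\sqrt{2}\sqrt{39n}/\sigma_n(\mathcal{H}_T)$ contribution. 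The second summand invokes Wedin's pseudoinverse perturbation $\|\widehat{\Gamma}_{T,p}^\dagger-\mathcal{U}^\top\overline{\Gamma}_{T,p}^\dagger\|\lesssim\|\widehat{\Gamma}_{T,p}^\dagger\|\|\overline{\Gamma}_{T,p}^\dagger\|\|\widehat{\Gamma}_{T,p}-\overline{\Gamma}_{T,p}\mathcal{U}\|$ together with $\|\overline{\Gamma}_{T,f}\|\leq\sqrt{\sigma_1(\mathcal{H}_T)}\leq\sqrt{Tm\overline{c}\overline{k}n}$, the last inequality following from the entrywise estimate $|(CA^{i}K)_{\alpha\beta}|\lesssim\overline{c}\overline{k}n$ valid because $A$ is power-bounded under Assumption~\ref{observable and controllable}. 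Collecting constants yields (\ref{the upper bound of A}).

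The main difficulty lies in the second step: producing a single $\mathcal{U}$ that simultaneously aligns both balanced factors with the stated constant $\sqrt{39n}$. This requires coordinating Wedin bounds on the $\widehat{\mathcal{U}}_1$ and $\widehat{\mathcal{V}}_1$ subspaces and pushing them through the balancing step $S_1^{1/2}$, whose perturbation is controlled by the Lipschitz bound on the matrix square root over $[\sigma_n(\mathcal{H}_T)/2,\sigma_1(\mathcal{H}_T)]$; the bookkeeping needed to obtain the exact constant is where most of the effort lies. The pseudoinverse step for $A$ is then routine given this alignment, modulo the minor subtlety that the lower bound on $\sigma_n(\widehat{\Gamma}_{T,p})$ must survive the deletion of the last $m$ rows, which is handled using the rank-$n$ structure of the balanced observability matrix.
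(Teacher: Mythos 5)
Your overall architecture matches the paper's: the same rank-stability setup ($\|\widehat{\mathcal{H}}_{T,n}-\mathcal{H}_T\|\leq 2\|\widehat{\mathcal{H}}_T-\mathcal{H}_T\|\leq\sigma_n(\mathcal{H}_T)/2$), the same splitting of $\widehat{A}-\mathcal{U}^\top\overline{A}\mathcal{U}$ into a pseudoinverse-perturbation term and a factor-perturbation term, the same Wedin pseudoinverse bound, the same interlacing argument for $\sigma_n(\widehat{\Gamma}_{T,p})$, and the same bound $\|\overline{\Gamma}_{T,f}\|\leq\|\mathcal{H}_T\|^{1/2}\leq\sqrt{\overline{c}\overline{k}Tmn}$. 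The $A$-step of your proposal is essentially the paper's argument.

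The genuine gap is in the central step: the existence of a \emph{single} orthogonal $\mathcal{U}$ that simultaneously aligns both balanced factors with the constant $\sqrt{39n/\sigma_n(\mathcal{H}_T)}$. The paper gets this in one stroke from a Procrustes-type result (Theorem 5.14 of Tu et al.), which bounds $\|\widehat{\Gamma}_T-\overline{\Gamma}_T\mathcal{U}\|_{\rm F}^2+\|\widehat{\mathcal{K}}_T-\mathcal{U}^\top\overline{\mathcal{K}}_T\|_{\rm F}^2$ by $\tfrac{2}{\sqrt{2}-1}\|\mathcal{H}_T-\widehat{\mathcal{H}}_{T,n}\|_{\rm F}^2/\sigma_n(\mathcal{H}_T)$, and then converts Frobenius to spectral norm via the rank-$2n$ structure of $\mathcal{H}_T-\widehat{\mathcal{H}}_{T,n}$ (whence $16n/(\sqrt{2}-1)<39n$). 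You instead propose to build $\mathcal{U}$ as the polar factor aligning $\widehat{\mathcal{U}}_1$ with $\mathcal{U}_1$ and assert that ``by the symmetry of the balancing'' it also aligns the right factors; that is not automatic, since the left and right singular subspaces of a non-symmetric $\mathcal{H}_T$ can be perturbed differently, and coordinating them with one orthogonal matrix is precisely the nontrivial content of the Procrustes lemma you would need. Your accounting of the $\sqrt{n}$ factor is also inconsistent with the paper's: you attribute it to a spectral-to-Frobenius inflation in a matrix square-root Lipschitz step, whereas in the actual proof it arises from $\|\mathcal{H}_T-\widehat{\mathcal{H}}_{T,n}\|_{\rm F}\leq\sqrt{2n}\,\|\mathcal{H}_T-\widehat{\mathcal{H}}_{T,n}\|$. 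A $\sin\Theta$-plus-square-root-Lipschitz route may well be made rigorous, but as written the key inequality is asserted rather than derived, and there is no reason the constant $39$ would survive your bookkeeping; you should either carry out that derivation in full or invoke the Procrustes-type factor-perturbation result directly.
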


\begin{remark}
    Lemma~\ref{the robustness of balanced realizations} reveals that the state-space realization of SIM is robust to noise up to trivial ambiguities. The robustness is controlled by the smallest singular value $\sigma_{n}(\mathcal{H}_T)$ of the block Hankel matrix $\mathcal{H}_T$. Here ``weakest'' is in terms of observability and controllability, therefore, a smaller $\sigma_{n}(\mathcal{H}_T)$ implies that there is a mode of the~\eqref{innovation form} is more difficult to identify.
\end{remark}

Finally, our next result combines the robustness of SIM with the finite sample identification error bound in Theorem~\ref{estimate error of hankel matrix} to obtain end-to-end estimation guarantees for the matrices $A,C,K$.

\begin{theorem}\label{end-to-end}
	For the~\eqref{innovation form}, under the conditions of Assumptions \ref{observable and controllable}, \ref{has converged to its steady state} and \ref{initial state and stable}, let $\widehat{\mathcal{H}}_T$ be the estimate~\eqref{regression problem} using multiple sample trajectories $ \left\{y_k^{(i)} \mid 0 \leq k \leq 2T-1, 1 \leq i \leq N\right\}$, and let $\mathcal{H}_T$ be as in Section~\ref{sec:sim}. Let $\overline{A}, \overline{C}$, $\overline{K}$ be the state-space realization corresponding to the output of SIM with $\mathcal{H}_T$ and $\widehat{A}, \widehat{C}$, $\widehat{K}$ be the state-space realization corresponding to the output of SIM with $\widehat{\mathcal{H}}_T$. For a failure probability $\delta \in (0,1)$, if the number of sample trajectories $N$ satisfies that $N \geq (6+4\sqrt{2})(\sqrt{Tm} + \sqrt{2\log(1/\delta)})^2$, and $\sigma_{n}(\mathcal{H}_T) > 0$, and perturbation obeys
	\begin{equation}
		\|\widehat{\mathcal{H}}_T - \mathcal{H}_T\|  \leq \frac{\sigma_{n}(\mathcal{H}_T)}{4},
	\end{equation}
	then there exists a unitary matrix $\mathcal{U} \in \mathbb{R}^{n \times n}$ such that with probability at least $1-6\delta$,
	\begin{multline}
		\max\left\{\| \widehat{C} - \overline{C} \mathcal{U} \| ,	\| \widehat{K} -\mathcal{U}^\top \overline{K} \| \right\} \\
        \leq  \frac{2}{ \lambda_{\min}(\mathcal{R})}  \left( \mathcal{C}_1 \mathcal{C}_2 + \sqrt{N}\|A_C^T\| \mathcal{C}_3 \right) \sqrt{\frac{39nT^5}{N \sigma_{n}(\mathcal{H}_T)}},
	\end{multline}
	and
	\begin{multline}
		\max\left\{\| \widehat{A} - \mathcal{U}^\top \overline{A} \mathcal{U} \| \right\} 
		\leq \frac{2}{ \lambda_{\min}(\mathcal{R})}  \left( \mathcal{C}_1 \mathcal{C}_2 + \sqrt{N}\|A_C^T\| \mathcal{C}_3 \right) \\
        \cdot \left(\sqrt{2} + 2 \sqrt{
        \frac{\overline{c}\overline{k}Tmn}{\sigma_{n}(\mathcal{H}_T)}
        } \right) \frac{\sqrt{39nT^5}}{\sqrt{N}\sigma_{n}(\mathcal{H}_T)},
	\end{multline}
	where the constants $\mathcal{C}_1$, $\mathcal{C}_2$ and $\mathcal{C}_3$ are respectively defined in~\eqref{c_1},~\eqref{c_2} and~\eqref{c_3}.
\end{theorem}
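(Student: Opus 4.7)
The plan is a direct chaining of Theorem~\ref{estimate error of hankel matrix} with Lemma~\ref{the robustness of balanced realizations}, followed by a one-line algebraic substitution. The two hypotheses in the theorem are tailored exactly for this: the sample-size condition $N \geq (6+4\sqrt{2})(\sqrt{Tm}+\sqrt{2\log(1/\delta)})^2$ activates Theorem~\ref{estimate error of hankel matrix}, and the perturbation assumption $\|\widehat{\mathcal{H}}_T-\mathcal{H}_T\|\le \sigma_n(\mathcal{H}_T)/4$ is precisely condition~\eqref{perturbation condition} required by Lemma~\ref{the robustness of balanced realizations}.

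First, I would apply Theorem~\ref{estimate error of hankel matrix} to obtain a high-probability event $\mathcal{E}$, of probability at least $1-5\delta$, on which the inequality~\eqref{upper bound of estimation error of hankel matrix} holds. Conditioning on $\mathcal{E}$ and on the assumed perturbation condition, Lemma~\ref{the robustness of balanced realizations} produces a single unitary $\mathcal{U}\in\mathbb{R}^{n\times n}$ that simultaneously controls all three realization errors via its bounds on $\|\widehat{C}-\overline{C}\mathcal{U}\|$, $\|\widehat{K}-\mathcal{U}^\top\overline{K}\|$, and $\|\widehat{A}-\mathcal{U}^\top\overline{A}\mathcal{U}\|$. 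Substituting~\eqref{upper bound of estimation error of hankel matrix} for $\|\widehat{\mathcal{H}}_T-\mathcal{H}_T\|$ on the right-hand sides of the lemma bounds and re-grouping the common factor $\sqrt{39n/\sigma_n(\mathcal{H}_T)}$ under a single square root together with the $\sqrt{T^5/N}$ from Theorem~\ref{estimate error of hankel matrix} yields exactly the two claimed end-to-end estimates; the $A$-bound picks up the additional factor $\sqrt{2}+2\sqrt{\overline{c}\overline{k}Tmn/\sigma_n(\mathcal{H}_T)}$ directly from~\eqref{the upper bound of A}.

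The probability bookkeeping is the only place where extra care is required: Theorem~\ref{estimate error of hankel matrix} already supplies $1-5\delta$, while the stated guarantee is $1-6\delta$. I would account for the missing $\delta$ by a union bound with one auxiliary regularity event --- for instance, one ensuring that $Y_pY_p^\top$ is sufficiently well-conditioned for the pseudo-inverse step in~\eqref{regression problem} to be meaningful --- which costs at most one additional $\delta$. I do not expect any step to be a serious obstacle: the hard analytical work lives in Theorem~\ref{estimate error of hankel matrix} (random-matrix concentration for the cross term and Kalman-filter-truncation bias) and in Lemma~\ref{the robustness of balanced realizations} (a Wedin/Davis--Kahan-style perturbation of the truncated SVD together with a perturbation argument for the shift-invariance relation defining $A$). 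The remainder is essentially arithmetic, the only mildly delicate point being that the unitary $\mathcal{U}$ must be the same across $A$, $C$, and $K$, which is already built into the statement of the lemma.
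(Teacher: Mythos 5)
Your proposal is correct and is exactly the argument the paper intends: Theorem~\ref{estimate error of hankel matrix} supplies the high-probability bound on $\|\widehat{\mathcal{H}}_T-\mathcal{H}_T\|$, Lemma~\ref{the robustness of balanced realizations} converts it (under the assumed perturbation condition) into the realization bounds with a single unitary $\mathcal{U}$, and the substitution is pure arithmetic. One small note on the bookkeeping: the well-conditioning of $Y_pY_p^\top$ is already one of the five events inside Theorem~\ref{estimate error of hankel matrix}'s $1-5\delta$, so your auxiliary event would double-count it; but since $1-5\delta\geq 1-6\delta$ the stated probability follows regardless.
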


According to Theorem~\ref{end-to-end}, $\sigma_{n}(\mathcal{H}_T)$ has a non-negligible impact on the upper bound of the identification error of the system matrices $A$, $C$, and $K$. To further elucidate this influence, the following lemma offers a rigorous theoretical characterization of $\sigma_{n}(\mathcal{H}_T)$.

\begin{lemma}\label{ill-conditioned}
	For the~\eqref{innovation form}, under the conditions of Assumption \ref{observable and controllable}, then the $n$-th largest singular value of $ \mathcal{H}_T $ satisfies the following inequality
	\begin{equation}\label{L}
		\sigma_{n}(\mathcal{H}_T) \leq 4\overline{c}\overline{k} Tmn \rho^{-\frac{\left\lfloor \frac{n-1}{2m} \right\rfloor}{\log (2mT)} },
	\end{equation}
	where $ \rho \triangleq e^{\frac{\pi^2}{4}} \approx 11.79$, and $\overline{c} \triangleq \max_{i,j} |c_{ij}|$ and $\overline{k} \triangleq \max_{i,j} |k_{ij}|$ denote the maximum absolute values of the entries of matrices $C$ and $K$, respectively.
\end{lemma}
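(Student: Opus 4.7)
The plan is to show that $\mathcal{H}_T$ possesses a low Sylvester displacement rank and then invoke a Zolotarev-type singular value bound (in the spirit of Beckermann--Townsend).

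Since $\mathcal{H}_T = \Gamma_T \mathcal{K}_T$ factors through an $n$-dimensional space, $\mathrm{rank}(\mathcal{H}_T)\le n$, so by the Eckart--Young theorem $\sigma_n(\mathcal{H}_T)$ equals the best rank-$(n-1)$ spectral approximation error. Thus it suffices to exhibit a rank-$(n-1)$ approximant with small error, and the Zolotarev machinery will supply one implicitly through the displacement structure.

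Next, I would derive a Sylvester-type displacement equation
\[
L\,\mathcal{H}_T - \mathcal{H}_T\,M = F G^\top,
\]
where $L$ and $M$ are (similarity-conjugated) shift operators of dimension $Tm$ whose spectra lie in $[-1,1]$, and the residual $FG^\top$ has $\mathrm{rank}(FG^\top)\le 2m$. The displacement rank $2m$ arises from two sources: (i) the rank-$m$ perturbation $A-A_C=KC$ between the generator of $\Gamma_T$ (namely $A$) and that of $\mathcal{K}_T$ (namely $A_C$), and (ii) the rank-$m$ boundary correction produced by a block shift acting on $\mathbb{R}^{Tm}$. Assumption~\ref{observable and controllable} (distinct real eigenvalues of $A$ in $[-1,1]$, together with stability of $A_C$) guarantees the spectral hypotheses required for the ensuing Zolotarev bound.

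Then I would invoke the Beckermann--Townsend theorem: any $N\times N$ matrix $X$ satisfying such a Sylvester equation with rank-$r$ displacement and shifts having real spectra in $[-1,1]$ obeys
\[
\sigma_{rk+1}(X)\le 4\,\|X\|\,\rho^{-k/\log N}, \qquad \rho = e^{\pi^2/4},
\]
a consequence of the classical estimate on Zolotarev numbers for a real interval. Setting $r=2m$, $N=2mT$, and $k=\lfloor (n-1)/(2m)\rfloor$ (so that $rk+1\le n$) yields
\[
\sigma_n(\mathcal{H}_T)\le 4\,\|\mathcal{H}_T\|\,\rho^{-\lfloor (n-1)/(2m)\rfloor / \log(2mT)}.
\]
A direct entrywise estimate $\|\mathcal{H}_T\|\le \|\Gamma_T\|_{\rm F}\|\mathcal{K}_T\|_{\rm F}\le \overline{c}\,\overline{k}\,T\,m\,n$, valid because $\|A^i\|$ and $\|A_C^i\|$ are uniformly bounded under Assumption~\ref{observable and controllable}, then closes the argument.

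The main obstacle is the second step. Because $\mathcal{H}_T$ is not a genuine block Hankel matrix---its left factor $\Gamma_T$ is generated by $A$ while the right factor $\mathcal{K}_T$ is generated by $A_C=A-KC$---the displacement equation does not emerge canonically. One has to engineer shifts $L, M$ so that simultaneously (a) their spectra satisfy the Zolotarev hypotheses on $[-1,1]$, and (b) the residual $FG^\top$ has rank at most $2m$, which requires absorbing the rank-$m$ perturbation $KC$ and the rank-$m$ shift-boundary term into the same budget.
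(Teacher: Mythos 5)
Your proposal has a genuine gap, and it is precisely the obstacle you flag at the end. The paper does not establish any displacement equation for $\mathcal{H}_T$ itself. Instead it uses the elementary singular-value inequality $\sigma_{n}(\mathcal{H}_T)=\sigma_n(\Gamma_T\mathcal{K}_T)\le \sigma_{n}(\Gamma_T)\,\|\mathcal{K}_T\|$, applies the cited Krylov-matrix lemma (Lemma~\ref{krylov matrix}) \emph{only to the observability factor} $\Gamma_T^\top=\begin{bmatrix}C^\top & A^\top C^\top & \cdots\end{bmatrix}$ --- whose generator $A$ is exactly the matrix that Assumption~\ref{observable and controllable} endows with real, distinct eigenvalues in $[-1,1]$ --- and then bounds $\|\Gamma_T\|\,\|\mathcal{K}_T\|\le \overline{c}\overline{k}Tmn$ via Lemma~\ref{upper bound of Gamma and J}. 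This factorization trick is the missing idea in your plan: it makes the awkward right factor $\mathcal{K}_T$ irrelevant to the Zolotarev argument.

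Your route, by contrast, needs the full Hankel matrix to carry a low-rank Sylvester displacement with shifts satisfying the Zolotarev hypotheses, and this fails for two concrete reasons. First, $\mathcal{K}_T$ is generated by $A_C=A-KC$, and Assumption~\ref{observable and controllable} constrains the spectrum of $A$, not of $A_C$; the closed-loop matrix $A_C$ is only guaranteed to be Schur stable and may well have complex eigenvalues, so the claim that the assumption ``guarantees the spectral hypotheses'' is unsupported. Second, the bound you quote --- $\sigma_{rk+1}(X)\le 4\|X\|\rho^{-k/\log N}$ for shifts whose spectra both lie in $[-1,1]$ --- is not a correct statement of the Beckermann--Townsend theorem: the Zolotarev number $Z_k(E,F)$ decays only when $E$ and $F$ are well separated, and two overlapping real intervals give no decay at all. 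The genuine theorem behind the $\rho=e^{\pi^2/4}$ rate is the ill-conditioning result for (block) Krylov matrices with a real-spectrum generator, which is exactly what Lemma~\ref{krylov matrix} packages and what the paper applies to $\Gamma_T$. Without repairing both points, your second step cannot be carried out, whereas the paper's argument avoids the problem entirely and closes in three lines.
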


\begin{remark}
    Lemma~\ref{ill-conditioned} reveals that under the given conditions above, $\sigma_{n}(\mathcal{H}_T)$ decays super-polynomially with respect to $n/m$.
\end{remark}

For the~\eqref{innovation form}, the results of Theorem~\ref{end-to-end} and Lemma~\ref{ill-conditioned} reveal that, if the sample parameter satisfies $T \sim \Theta(\log N)$, the estimation errors of matrices $C,K$ decay at the rate of at least $\mathcal{O}\left( \frac{(\log N)^{2}}{\sqrt{N}} \right)$, while the estimation error of matrix $A$ decays at the rate of at least $\mathcal{O}\left( \frac{(\log N)^{3/2}}{\sqrt{N}} \right)$.

Moreover, the required number of sample trajectories \( N \) to achieve a constant estimation error for the matrices $C,K$ with high probability must satisfy
\begin{equation}
    N \sim \Omega \left( (mn \log N)^4 \rho^{ \frac{\left\lfloor \frac{n-1}{2m} \right\rfloor}{\log (2m\log N)}} 
    \right),
\end{equation}
where \( \rho \triangleq  e^{\frac{\pi^2}{4}} \approx 11.79 \). Similarly, the required number of sample trajectories \( N \) to achieve a constant estimation error for the matrix $A$ with high probability must satisfy
\begin{equation}\label{appro_A}
    N \sim \Omega \left( (mn \log N)^3 \varrho^{ \frac{\left\lfloor \frac{n-1}{2m} \right\rfloor}{\log (2m\log N)}} 
    \right),
\end{equation}
where \( \varrho \triangleq \rho^3 = e^{\frac{3\pi^2}{4}} \approx 1639.59 \). 

These results imply that, ignoring the logarithmic factors, the estimation errors of matrices $A,C,K$ decay at the rate of at least $\mathcal{O}\left(1/\sqrt{N}\right)$, and achieving a constant estimation error requires a \textbf{super-polynomial} sample size $N \times T$ in $n/m$, where $n/m$ denotes the state-to-output dimension ratio.

\section{Finite Sample Analysis of System Poles}
\label{sec:system pole}

To date, our investigation has focused on characterizing the finite sample identification error of the system matrices. Since poles are invariant under similarity transformations and play a key role in many controller-design methods, it is also important to analyze their finite sample identification error. Building on the previous results, this section further derives finite sample high-probability upper bounds on the identification error of the system poles.

To characterize the gap between the spectra of the matrices $\overline{A}$ and $\widehat{A}$, we introduce the Hausdorff distance~\cite{hausdorff1914grundzuge}, definied as follows.

\begin{definition}[Hausdorff Distance~\cite{hausdorff1914grundzuge}]
	Given $\mathcal{A} = (\alpha_{ij}) \in \mathbb{C}^{n \times n}$ and $\mathcal{B} = (\beta_{ij}) \in \mathbb{C}^{n \times n}$, suppose that $\lambda(\mathcal{A}) =  \{\lambda_1(\mathcal{A}), \cdots, \lambda_n(\mathcal{A})\}$ and $\lambda(\mathcal{B}) = \{\mu_1(\mathcal{B}), \cdots, \mu_n(\mathcal{B})\}$ are the spectra of matrix $\mathcal{A}$ and $\mathcal{B}$, respectively, then the Hausdorff distance between the spectra of matrix $\mathcal{A}$ and $\mathcal{B}$ is defined as
	\begin{equation}
		d_{\rm H}(\mathcal{A},\mathcal{B}) \triangleq \max\{{\rm sv}_{\mathcal{A}}(\mathcal{B}), {\rm sv}_{\mathcal{B}}(\mathcal{A})\},
	\end{equation}
	where the spectrum variation of $\mathcal{B}$ with respect to $\mathcal{A}$ is given by
	\begin{equation}
		{\rm sv}_{\mathcal{A}}(\mathcal{B}) \triangleq \max_{1\leq j\leq n} \min_{1\leq i\leq n} | \lambda_i(\mathcal{A})- \mu_j(\mathcal{B})|.
	\end{equation} 
\end{definition}


The following result demonstrates the robustness of the system poles for the SIM against adversarial disturbances that may arise in the block Hankel matrix $\mathcal{H}_T$. This result follows the steps in~\cite{sun2024nonasymptoticerroranalysissubspace}.

\begin{theorem}\label{the system poles}
	For the~\eqref{innovation form}, under the conditions of Assumptions \ref{observable and controllable}, \ref{has converged to its steady state} and \ref{initial state and stable}, let $\widehat{\mathcal{H}}_T$ be the estimate~\eqref{regression problem} using multiple sample trajectories $ \left\{y_k^{(i)} \mid 0 \leq k \leq 2T-1, 1 \leq i \leq N\right\}$, and let $\mathcal{H}_T$ be as in Section~\ref{sec:sim}. Let $\overline{A}, \overline{C}$, $\overline{K}$ be the state-space realization corresponding to the output of SIM with $\mathcal{H}_T$ and $\widehat{A}, \widehat{C}$, $\widehat{K}$ be the state-space realization corresponding to the output of SIM with $\widehat{\mathcal{H}}_T$. For a failure probability $\delta \in (0,1)$, if the number of sample trajectories $N$ satisfies that $N \geq (6+4\sqrt{2})(\sqrt{Tm} + \sqrt{2\log(1/\delta)})^2$, and $\sigma_{n}(\mathcal{H}_T) > 0$, and perturbation obeys
	\begin{equation}
		\|\widehat{\mathcal{H}}_T - \mathcal{H}_T\|  \leq \frac{\sigma_{n}(\mathcal{H}_T)}{4},
	\end{equation}
	then with probability at least $1-6\delta$,
	\begin{equation}\label{upper bound for system pole}
		d_{\rm H}(\widehat{A},\overline{A}) \leq 
		 (\Delta + 2\|\overline{A}\|)^{1-\frac{1}{n}} \Delta^{\frac{1}{n}}
	\end{equation}
	where 
    \begin{equation}
        \begin{aligned}
            \Delta &= \frac{2}{ \lambda_{\min}(\mathcal{R})}  \left( \mathcal{C}_1 \mathcal{C}_2 + \sqrt{N}\|A_C^T\| \mathcal{C}_3 \right)  \\
         & \qquad \qquad \cdot \left(\sqrt{2} + 2 \sqrt{
        \frac{\overline{c}\overline{k}Tmn}{\sigma_{n}(\mathcal{H}_T)}
        } \right) \frac{\sqrt{39nT^5}}{\sqrt{N}\sigma_{n}(\mathcal{H}_T)},
        \end{aligned}
    \end{equation}
    and the constants $\mathcal{C}_1$, $\mathcal{C}_2$ and $\mathcal{C}_3$ are respectively defined in~\eqref{c_1},~\eqref{c_2} and~\eqref{c_3}.
\end{theorem}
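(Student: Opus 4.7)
The plan is to decouple the spectrum-perturbation question from the matrix-estimation question, since the latter has already been addressed in Theorem~\ref{end-to-end}. Under the stated sample-size and perturbation hypotheses, that theorem guarantees, with probability at least $1-6\delta$, the existence of a unitary $\mathcal{U}$ such that $\|\widehat{A} - \mathcal{U}^\top \overline{A}\mathcal{U}\| \leq \Delta$, where $\Delta$ is precisely the expression appearing in the statement. Since similarity transformations preserve the spectrum, $\lambda(\mathcal{U}^\top \overline{A}\mathcal{U}) = \lambda(\overline{A})$, so $d_{\rm H}(\widehat{A},\overline{A}) = d_{\rm H}(\widehat{A}, \mathcal{U}^\top \overline{A}\mathcal{U})$; it therefore suffices to bound the Hausdorff distance between the spectra of two matrices whose operator-norm distance is at most $\Delta$.

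Next, I would invoke Elsner's classical spectral-perturbation inequality: for any $\mathcal{X},\mathcal{Y}\in\mathbb{C}^{n\times n}$,
\[
d_{\rm H}(\mathcal{X},\mathcal{Y}) \leq \bigl(\|\mathcal{X}\|+\|\mathcal{Y}\|\bigr)^{1-1/n}\|\mathcal{X}-\mathcal{Y}\|^{1/n}.
\]
Setting $\mathcal{X} = \mathcal{U}^\top \overline{A}\mathcal{U}$ and $\mathcal{Y} = \widehat{A}$, the unitary invariance of the spectral norm gives $\|\mathcal{X}\|=\|\overline{A}\|$, and the triangle inequality yields $\|\widehat{A}\|\leq \|\overline{A}\|+\Delta$. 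Substituting into Elsner's bound together with $\|\mathcal{X}-\mathcal{Y}\|\leq \Delta$ yields
\[
d_{\rm H}(\widehat{A},\overline{A}) \leq (2\|\overline{A}\|+\Delta)^{1-1/n}\Delta^{1/n},
\]
which is exactly the target bound~\eqref{upper bound for system pole}.

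The main conceptual content is the invocation of Elsner's inequality; the remaining ingredients --- similarity-invariance of the spectrum, unitary invariance of the operator norm, and the triangle inequality --- are elementary. The only subtlety I anticipate is probabilistic bookkeeping: the bound on $\|\widehat{A} - \mathcal{U}^\top \overline{A}\mathcal{U}\|$ holds on the same $(1-6\delta)$-probability event furnished by Theorem~\ref{end-to-end}, so no further union bound is needed and the confidence level is inherited verbatim. It is worth noting that the $1/n$ exponent on $\Delta$ in Elsner's inequality is precisely what produces the slow $\mathcal{O}(N^{-1/(2n)})$ pole-identification rate advertised in the introduction; this exponent cannot be improved without further structural assumptions on $\overline{A}$, since Jordan-block-type configurations are well-known to saturate Elsner's inequality.
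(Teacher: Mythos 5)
Your proposal is correct and follows essentially the same route as the paper's own proof: invoke Theorem~\ref{end-to-end} for the bound $\Delta$ on the unitarily-aligned difference, use the triangle inequality to bound the norm of the perturbed matrix, and apply Elsner's inequality (Corollary~\ref{hausdorff_distance}) together with the unitary invariance of the spectrum. The only cosmetic difference is which of $\overline{A}$ or $\widehat{A}$ you conjugate by $\mathcal{U}$, which is immaterial.
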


\begin{remark}
    Theorem~\ref{the system poles} provides end-to-end estimation guarantees for the system poles and shows that the estimation error of the system poles is controlled by that of the matrix \( A \). This is not surprising, as the eigenvalues of a matrix exhibit good continuity. 
\end{remark}

For the~\eqref{innovation form}, the results of Theorem~\ref{the system poles} and Lemma~\ref{ill-conditioned} reveal that, ignoring the logarithmic factors, if the sample parameter satisfies $T \sim \Theta(\log N)$, the estimation error of the system poles decays at the rate of at least $\mathcal{O}\left( N^{-1/2n} \right)$. Furthermore, in light of~\eqref{appro_A} and~\eqref{upper bound for system pole}, achieving a constant estimation error also requires a \textbf{super-polynomial} sample size $N \times T$ in $n/m$, where $n/m$ denotes the state-to-output dimension ratio.

\section{Numerical Results}
\label{sec:numerical results}

In this section, we first investigate the identification of the classical two-mass spring-damper system \cite{bhutta2016interactive}, using the SIM. The system consists of two point masses, $m_1 = 1$ kg and $m_2 = 1$ kg, interconnected in series via linear springs and viscous dampers. Each mass is also coupled to a fixed boundary via a spring-damper pair, as illustrated in \autoref{fig:enter-label}.

\begin{figure}[H]
    \centering
    \begin{tikzpicture}[
       spring/.style={  
        thick,  
        decoration={  
            coil,  
            aspect=0.5,  
            segment length=1mm,  
            amplitude=1mm,  
            pre length=3mm,  
            post length=3mm  
        },
        decorate  
    },
   wall/.style={  
        thick,
        pattern=north east lines,  
        minimum height=1.5cm,  
        minimum width=1pt  
    },
    wheel/.style={  
        circle,
        fill=white,
        draw=black,
        thick,
        minimum size=4pt,  
        outer sep=0pt
    },
    ground/.style={  
        thick,
        pattern=north east lines,  
        minimum height=0.1cm,  
        anchor=north  
    },
     damper/.style={
        draw, 
        thick,
        line width=1pt,
        minimum width=0.3cm,
        minimum height=0.2cm,
        inner sep=0pt,
        outer sep=0pt
    }
]

\def\wallsep{7cm}     
\def\masswidth{1.2cm} 
\def\massheight{0.8cm} 
\def\groundlevel{-1cm}  
\def\wheeloffset{0.3cm} 
\def\dampershift{0.3cm} 

\node (ground) [ground, minimum width=\wallsep, anchor=north] at (\wallsep/2, \groundlevel-5) {};
\draw[thick] (ground.north west) -- (ground.north east);

\node (left wall) [wall, anchor=east] at (0, \groundlevel+16.5) {};  
\draw[thick] (left wall.north east) -- (left wall.south east);  

\node (right wall) [wall, anchor=west] at (\wallsep, \groundlevel+16.5) {};  
\draw[thick] (right wall.north west) -- (right wall.south west);  

\node (mass1) [draw, thick, minimum width=\masswidth, minimum height=\massheight, anchor=south] at (2, \groundlevel) {$m_1$};
\fill (1.4,\massheight/2+\groundlevel+6) circle (1pt);  
\fill (2.6,\massheight/2+\groundlevel+6) circle (1pt);  
\fill (1.4,\massheight/2+\groundlevel-6.5) circle (1pt); 
\fill (2.6,\massheight/2+\groundlevel-6.5) circle (1pt); 

\fill (0,\massheight/2+\groundlevel+6) circle (1pt); 
\fill (0,\massheight/2+\groundlevel-6.5) circle (1pt); 
\fill (7,\massheight/2+\groundlevel+6) circle (1pt); 

\node [wheel] at ([xshift=-\wheeloffset]mass1.south) {};  
\node [wheel] at ([xshift=\wheeloffset]mass1.south) {};   

\node (mass2) [draw, thick, minimum width=\masswidth, minimum height=\massheight, anchor=south] at (5, \groundlevel) {$m_2$};
\fill (4.4,\massheight/2+\groundlevel+6) circle (1pt);  
\fill (5.6,\massheight/2+\groundlevel+6) circle (1pt);  
\fill (4.4,\massheight/2+\groundlevel-6.5) circle (1pt); 

\node [wheel] at ([xshift=-\wheeloffset]mass2.south) {};  
\node [wheel] at ([xshift=\wheeloffset]mass2.south) {};   

\draw[spring] (0,\massheight/2+\groundlevel+6) -- (1.4,\massheight/2+\groundlevel+6)
    node [midway, above=0.5mm] {\small $k_1$};

\draw[spring] (2.6,\massheight/2+\groundlevel+6) -- (4.4,\massheight/2+\groundlevel+6)
    node [midway, above=0.5mm] {\small $k_2$};

\draw[spring] (5.6,\massheight/2+\groundlevel+6) -- (7,\massheight/2+\groundlevel+6)
    node [midway, above=0.5mm] {\small $k_3$};

\coordinate (dampStart1) at ([yshift=-\dampershift]0,\massheight/2+\groundlevel+2);
\coordinate (dampEnd1) at ([yshift=-\dampershift]1.4,\massheight/2+\groundlevel+2);

\node [damper, anchor=center] (d1) at ($(dampStart1)!0.5!(dampEnd1)$) {};

\draw[thick] (dampStart1) -- (d1.west);
\draw[thick] (d1.east) -- (dampEnd1);

\draw[thick] (d1.north east) -- ++(0.1cm,0);
\draw[thick] (d1.south east) -- ++(0.1cm,0);

\node[below] at (1.1,-0.8) {\small $c_1$};

\coordinate (dampStart2) at ([yshift=-\dampershift]2.6,\massheight/2+\groundlevel+2);
\coordinate (dampEnd2) at ([yshift=-\dampershift]4.4,\massheight/2+\groundlevel+2);

\node [damper, anchor=center] (d2) at ($(dampStart2)!0.5!(dampEnd2)$) {};

\draw[thick] (dampStart2) -- (d2.west);
\draw[thick] (d2.east) -- (dampEnd2);

\draw[thick] (d2.north east) -- ++(0.1cm,0);

\draw[thick] (d2.south east) -- ++(0.1cm,0);

\node[below] at (1.1+2.8,-0.8) {\small $c_2$};

\draw[thick, ->,black] (2,-0.2+0.2) -- ++(0.8,0) 
    node[above] {$q_1$};
\draw[thick, ->, black] (5,-0.2+0.2) -- ++(0.8,0) 
    node[above] {$q_2$};

\draw[thick] (2,-0.2) -- (2,-0.2+0.6);

\draw[thick] (5,-0.2) -- (5,-0.2+0.6
);

\end{tikzpicture}
    \caption{A two-mass spring-damper system.}
    \label{fig:enter-label}
\end{figure}

Let $q_1(t)$ and $q_2(t)$ denote the horizontal displacements of the masses $m_1$ and $m_2$, respectively. These displacements constitute the system outputs. Applying Newton's second law yields the following equations of motion:
\begin{align}\notag
    m_1 \ddot{q}_1 &= -k_1 q_1 + k_2(q_2 - q_1) - c_1 \dot{q}_1 + c_2(\dot{q}_2 - \dot{q}_1), \\\notag
    m_2 \ddot{q}_2 &= -k_2(q_2 - q_1) - k_3 q_2 - c_2(\dot{q}_2 - \dot{q}_1).
\end{align}
Defining the state vector and output vector as $x(t) = \begin{bmatrix}
        q_1(t) & \dot{q}_1(t) & q_2(t) & \dot{q}_2(t)
    \end{bmatrix}^\top$,  $y(t) = \begin{bmatrix}
        q_1(t)  & q_2(t) 
    \end{bmatrix}^\top$,
the system dynamics can be expressed in the continuous-time state-space form:
\begin{align}\notag
    \dot{x}(t) &= A_cx(t), \\\notag
    y(t) &= C_cx(t),
\end{align}
where 
$
A_c = \begin{bmatrix}
0 & 1 & 0 & 0 \\
-\frac{k_1 + k_2}{m_1} & -\frac{c_1 + c_2}{m_1} & \frac{k_2}{m_1} & \frac{c_2}{m_1} \\
0 & 0 & 0 & 1 \\
\frac{k_2}{m_2} & \frac{c_2}{m_2} & -\frac{k_2 + k_3}{m_2} & -\frac{c_2}{m_2}
\end{bmatrix}, \quad
C_c = \begin{bmatrix}
1 & 0 & 0 & 0 \\
0 & 0 & 1 & 0
\end{bmatrix}
$.

To facilitate identification and digital implementation, the continuous-time system is discretized via zero-order hold (ZOH) with a sampling interval of $T_s = 0.1$ seconds. The resulting discrete-time  stochastic model is:
\begin{equation}\label{discrete-time system}
    \begin{aligned}
        x_{k+1} &= A_d x_k + w_k, \\
        y_k &= C_d x_k + v_k,
    \end{aligned}
    \tag{\textsf{Test-system}}
\end{equation}
where $A_d = e^{A_c T_s}$, $C_d = C_c$. The process noise $w_k \sim \mathcal{N}(\mathbf{0}, 10^{-4}\mathbf{I}_4)$ models the environmental vibrations acting on the structure, while the measurement noise $v_k \sim \mathcal{N}(\mathbf{0}, 10^{-4}\mathbf{I}_2)$ accounts for sensor errors; both noise sequences are assumed to be i.i.d. and mutually independent.

For the~\eqref{discrete-time system}, we consider two different parameter sets corresponding to distinct stability categories—stable and marginally stable.
\begin{itemize}
    \item Stable: $k_1 = 0.5$ N/m, $k_2 = 0.7$ N/m, $k_3 = 0.6$ N/m; $c_1 = c_2 = 5$ N$\cdot$s/m. Matrix $A_d$ has the spectrum $\lambda(A_d) = \{0.27,0.99,0.95,0.86\}$.
    \item Marginally stable: $k_1 = 0.5$ N/m, $k_2 = 0.7$ N/m, $k_3 = 0.6$ N/m; $c_1 = 60$ N$\cdot$s/m, $c_2 = 5$ N$\cdot$s/m. Matrix $A_d$ has the spectrum $\lambda(A_d) = \{0.001,0.65,0.97,1.00\}$.
\end{itemize}

In the experiments, the number of trajectories, $N$, varies from $500$ to $5000$, while the length of sample trajectories $T$ is set to $\lceil \log N \rceil$. Each scenario is evaluated over $100$ independent trials. 

As shown in~\autoref{fig:all}, \textbf{\romannumeral1}) subfigures (a) and (b) illustrate the estimation error of the block Hankel matrix $\mathcal{H}_T$ for stable (red) and marginally stable (blue) systems, respectively, under varying numbers of sample trajectories $N$; \textbf{\romannumeral2}) subfigures (b) and (e) depict the estimation error of the system poles for stable (red) and marginally stable (blue) systems, respectively, under different values of $N$, where the estimation error of the system poles is measured using the Haussdorff distance; \textbf{\romannumeral3}) subfigures (c) and (f) compare the distribution of true poles (red `$\circ$') with the distribution of estimated poles (blue `+') for stable and marginally stable systems, respectively. The results shown in subfigures (a), (b), (d), and (e) are visualized using shaded error bar.

According to~\autoref{fig:H_stable} and~\autoref{fig:H_marginal}, the estimation error of the block Hankel matrix $\mathcal{H}_T$ decreases as the number of sample trajectories $N$ increases, both for stable and marginally stable systems. In contrast, \autoref{fig:pole_stable} and~\autoref{fig:pole_marginal} indicate that the estimation error of the system poles decreases at a slower rate as $N$ increases.

\begin{figure*}[bthp]
    \centering
    \begin{subfigure}[t]{0.32\linewidth}
        \centering
        \includegraphics[width=\linewidth]{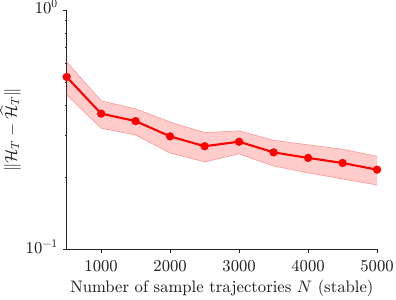}
        \caption{}
        \label{fig:H_stable}
    \end{subfigure}%
    \hfill
    \begin{subfigure}[t]{0.32\linewidth}
        \centering
        \includegraphics[width=\linewidth]{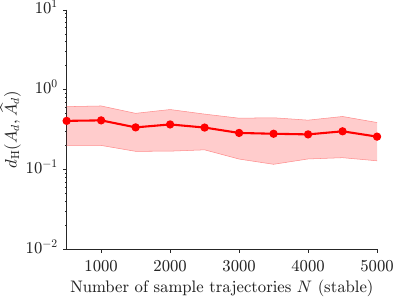}
        \caption{}
        \label{fig:pole_stable}
    \end{subfigure}%
    \hfill
    \begin{subfigure}[t]{0.25\linewidth}
        \centering
        \includegraphics[width=\linewidth]{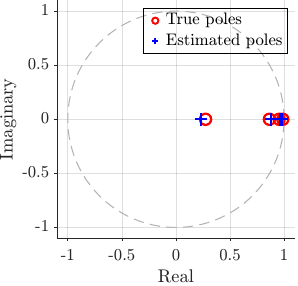}
        \caption{}
        \label{fig:pole_distribution_stable}
    \end{subfigure}
    \hfill

     \begin{subfigure}[t]{0.32\linewidth}
        \centering
        \includegraphics[width=\linewidth]{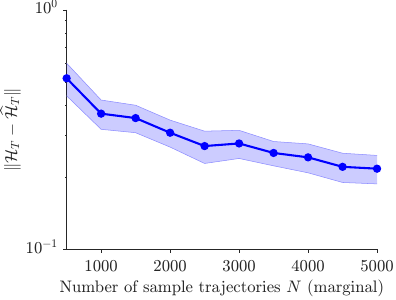}
        \caption{}
        \label{fig:H_marginal}
    \end{subfigure}%
    \hfill
    \begin{subfigure}[t]{0.32\linewidth}
        \centering
        \includegraphics[width=\linewidth]{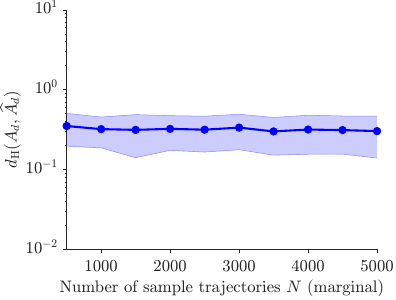}
        \caption{}
        \label{fig:pole_marginal}
    \end{subfigure}%
    \hfill
    \begin{subfigure}[t]{0.25\linewidth}
        \centering
        \includegraphics[width=\linewidth]{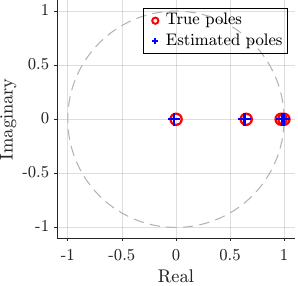}
        \caption{}
        \label{fig:pole_distribution_marginal}
    \end{subfigure}
    \caption{Subfigures (a) and (d): The estimation error of the block Hankel matrix $\mathcal{H}_T$ for stable (red) and marginally stable (blue) systems, respectively, under varying numbers of sample trajectories $N$. Subfigures (b) and (e): The estimation error of the system poles for stable (red) and marginally stable (blue) systems, respectively, under different values of $N$. Subfigures (c) and (f): Comparison of the distribution of true poles (red `$\circ$') with the distribution of estimated poles (blue `+') for stable and marginally stable systems, respectively.}
    \label{fig:all}
\end{figure*}

\section{Conclusions}
\label{sec:conclusions}

This paper presents an error analysis for the identification of discrete-time LTI systems using the SIM with finite sample data. We derive high-probability error bounds for the system matrices $A,C$, the Kalman filter gain $K$ and the estimation of system poles. Specifically, we prove that for an $n$-dimensional LTI system with no external inputs, the estimation error of these matrices decreases at a rate of at least $ \mathcal{O}(\sqrt{1/N}) $ with $N$ independent sample trajectories, while the estimation error of the system poles decays at a rate of at least $ \mathcal{O}(N^{-1/2n}) $. Additionally, we demonstrate that achieving a constant estimation error requires a super-polynomial sample size in $n/m $, where $n/m$ denotes the state-to-output dimension ratio. In future work, we will explore extending this analysis to systems with external inputs.

\section*{Appendix}

\subsection{Some Lemmas}

\begin{lemma}[Two-sided bound on Gaussian matrices, \cite{rudelson2009smallest}]\label{Two-sided bound on Gaussian matrices}
    Let $M$ be an $N \times n$ ($N \geq n$) matrix whose entries are independently and identically distributed Gaussian with mean zero and unit. Then for any $ t \geq 0$, we have
    \begin{equation}
        \mathbb{P}\left[
            \sigma_{\min}(A) \leq \sqrt{N} - \sqrt{n} - t
        \right] \leq e^{-\frac{t^2}{2}},
    \end{equation}
    and
    \begin{equation}
        \mathbb{P}\left[
            \|A\| \geq \sqrt{N} + \sqrt{n} + t
        \right] \leq e^{-\frac{t^2}{2}}.
    \end{equation}
\end{lemma}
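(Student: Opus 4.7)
The plan is to combine two classical ingredients: Gaussian concentration of Lipschitz functions (Borell--Tsirelson--Ibragimov--Sudakov) to control fluctuations, and Gordon's min--max comparison theorem to control the expectations of $\sigma_{\min}(M)$ and $\sigma_{\max}(M)=\|M\|$. Identify the matrix $M\in\mathbb{R}^{N\times n}$ with the Gaussian vector of its $Nn$ entries in $\mathbb{R}^{Nn}$, and work throughout with the Frobenius norm on matrices, which coincides with the $\ell^{2}$ norm on the vectorization.

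First I would verify that $f_{+}(M)\triangleq \sigma_{\max}(M)$ and $f_{-}(M)\triangleq \sigma_{\min}(M)$ are both $1$-Lipschitz with respect to $\|\cdot\|_{\rm F}$. This follows from the variational characterization $\sigma_{\max}(M)=\sup_{\|u\|=\|v\|=1}u^{\top}Mv$ and the corresponding dual formula for $\sigma_{\min}$, together with the fact that $|u^{\top}(M_{1}-M_{2})v|\leq\|M_{1}-M_{2}\|_{\rm F}$. Then the Borell--TIS inequality yields, for every $t\geq 0$,
\begin{equation}
\mathbb{P}\bigl[\,f_{+}(M)\geq \mathbb{E}\,f_{+}(M)+t\,\bigr]\leq e^{-t^{2}/2},
\end{equation}
\begin{equation}
\mathbb{P}\bigl[\,f_{-}(M)\leq \mathbb{E}\,f_{-}(M)-t\,\bigr]\leq e^{-t^{2}/2}.
\end{equation}

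Next I would estimate the expectations via Gordon's inequality. Writing $\sigma_{\max}(M)=\sup_{u\in S^{N-1},\,v\in S^{n-1}}u^{\top}Mv$ and $\sigma_{\min}(M)=\inf_{v\in S^{n-1}}\sup_{u\in S^{N-1}}u^{\top}Mv$, one compares the Gaussian process $X_{u,v}=u^{\top}Mv$ with the auxiliary process $Y_{u,v}=g^{\top}u+h^{\top}v$, where $g\in\mathbb{R}^{N}$ and $h\in\mathbb{R}^{n}$ are independent standard Gaussian vectors. A direct covariance calculation shows the Slepian--Gordon hypothesis is satisfied, from which the min--max version of Gordon's theorem gives
\begin{equation}
\mathbb{E}\,\sigma_{\max}(M)\leq \mathbb{E}\|g\|+\mathbb{E}\|h\|\leq \sqrt{N}+\sqrt{n},
\end{equation}
\begin{equation}
\mathbb{E}\,\sigma_{\min}(M)\geq \mathbb{E}\|g\|-\mathbb{E}\|h\|\geq \sqrt{N}-\sqrt{n},
\end{equation}
where the last step uses Jensen's inequality $\mathbb{E}\|g\|\leq \sqrt{N}$ and $\mathbb{E}\|h\|\leq \sqrt{n}$, and (for the lower bound) the concavity refinement $\mathbb{E}\|g\|\geq \sqrt{N}-1/\sqrt{N}$ is not needed because any slack is absorbed into the Borell--TIS tail. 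Substituting these expectation bounds into the concentration inequalities above yields the two stated tail bounds.

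The main obstacle is Gordon's comparison theorem itself: unlike Slepian's inequality, it requires a min--max formulation and a one-sided covariance condition that must be checked carefully on the product sphere $S^{N-1}\times S^{n-1}$. I would either cite this as a black-box result from the Gaussian-process literature, or prove it via the standard Gaussian interpolation argument ($M_{s}=\sqrt{s}M+\sqrt{1-s}\,M'$, differentiate $\mathbb{E}\,\phi(\min\max X_{u,v})$ in $s$, and use integration by parts on the Gaussian density) that gives Slepian-type bounds. Once Gordon's bounds on the expectations are in hand, the remainder of the argument is a routine application of Gaussian Lipschitz concentration.
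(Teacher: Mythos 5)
The paper does not actually prove this lemma; it is imported verbatim from Rudelson--Vershynin, so there is no internal proof to compare against. Your architecture --- Weyl/variational Lipschitz bounds in the Frobenius norm, Borell--TIS concentration around the mean, and Gordon's min--max comparison with the decoupled process $Y_{u,v}=g^\top u+h^\top v$ to control the means --- is exactly the standard literature proof (Davidson--Szarek; Vershynin), and the $\sigma_{\max}$ half is complete: the covariance computation $\mathbb{E}|Y_{u,v}-Y_{u',v'}|^2-\mathbb{E}|X_{u,v}-X_{u',v'}|^2=2(1-u^\top u')(1-v^\top v')\ge 0$ verifies both the Slepian and the Gordon hypotheses, and Jensen gives $\mathbb{E}\|g\|+\mathbb{E}\|h\|\le\sqrt{N}+\sqrt{n}$ as you say.

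The gap is in the $\sigma_{\min}$ half, at the step $\mathbb{E}\|g\|-\mathbb{E}\|h\|\ge\sqrt{N}-\sqrt{n}$. Jensen's inequality bounds $\mathbb{E}\|g\|$ from \emph{above} by $\sqrt{N}$, which is the wrong direction here: you need a lower bound on $\mathbb{E}\|g\|$ and an upper bound on $\mathbb{E}\|h\|$, and the elementary bounds $\sqrt{k-1}\le\mathbb{E}\|g_k\|\le\sqrt{k}$ only yield $\mathbb{E}\|g\|-\mathbb{E}\|h\|\ge\sqrt{N-1}-\sqrt{n}$, which falls short of $\sqrt{N}-\sqrt{n}$ by roughly $1/(2\sqrt{N})$. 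Your remark that "any slack is absorbed into the Borell--TIS tail" does not rescue the statement as written: if $\mathbb{E}\,\sigma_{\min}\ge\sqrt{N}-\sqrt{n}-\varepsilon$ only, concentration gives $\mathbb{P}[\sigma_{\min}\le\sqrt{N}-\sqrt{n}-t]\le e^{-(t-\varepsilon)^2/2}$ for $t\ge\varepsilon$, not $e^{-t^2/2}$, so the exact constants of the lemma are not recovered (even your proposed refinement $\mathbb{E}\|g\|\ge\sqrt{N}-1/\sqrt{N}$ leaves the same residual $\varepsilon$). The standard repair is to prove that $k\mapsto\mathbb{E}\|g_k\|-\sqrt{k}$ is nondecreasing --- e.g., from the Gamma-function identity $\mathbb{E}\|g_k\|\cdot\mathbb{E}\|g_{k-1}\|=k-1$ --- which gives $\mathbb{E}\|g_N\|-\mathbb{E}\|g_n\|\ge\sqrt{N}-\sqrt{n}$ exactly and closes the argument. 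For the paper's downstream use (Corollary~\ref{two-sided bound}, where $N$ is large and there is ample slack) the discrepancy is immaterial, but as a proof of the lemma with its stated constants this step must be fixed.
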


\begin{corollary}\label{two-sided bound}
    Let $x_i \in \mathbb{R}^n$ be independently and identically distributed Gaussian with mean zero and covariance matrix $\Sigma_x$, and set $M=\sum_{i=1}^N x_i x_i^{\top}$. Fix a failure probability $\delta \in(0,1]$, and let $N \geq (6+4\sqrt{2})(\sqrt{n}+\sqrt{2\log(1/\delta)})^2$. Then with probability at least $1-\delta$, we have that 
    \begin{equation}\label{lower bound for gaussian matrix}
        \lambda_{\min }(M) \geq \frac{N}{2} \cdot \lambda_{\min }\left(\Sigma_x\right).
    \end{equation}
    On the other hand, with probability at least $1-\delta$, we have that 
    \begin{equation}\label{upper bound for gaussian matrix}
        \| M \| \leq \sqrt{3} N \| \Sigma_x\| .
    \end{equation}
\end{corollary}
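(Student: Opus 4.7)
The plan is to reduce the problem to Lemma~\ref{Two-sided bound on Gaussian matrices} by whitening the Gaussian vectors. Let $g_i \triangleq \Sigma_x^{-1/2} x_i \sim \mathcal{N}(\mathbf{0},\mathbf{I}_n)$ i.i.d., and let $G \in \mathbb{R}^{N \times n}$ be the matrix whose rows are $g_i^\top$. Then $x_i = \Sigma_x^{1/2} g_i$, and therefore
\begin{equation*}
M = \sum_{i=1}^N x_i x_i^\top = \Sigma_x^{1/2} G^\top G\, \Sigma_x^{1/2}.
\end{equation*}
For any unit vector $u \in \mathbb{R}^n$, one has $u^\top M u = \| G \Sigma_x^{1/2} u\|^2$, so the elementary inequalities
\begin{equation*}
\sigma_{\min}(G)^2 \lambda_{\min}(\Sigma_x) \leq \lambda_{\min}(M), \qquad \|M\| \leq \|G\|^2 \|\Sigma_x\|
\end{equation*}
hold deterministically. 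Thus it suffices to control $\sigma_{\min}(G)$ from below and $\|G\|$ from above.

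Apply Lemma~\ref{Two-sided bound on Gaussian matrices} with $t = \sqrt{2\log(1/\delta)}$. Each of the two events
\begin{equation*}
\sigma_{\min}(G) \geq \sqrt{N} - \sqrt{n} - t, \qquad \|G\| \leq \sqrt{N} + \sqrt{n} + t
\end{equation*}
holds with probability at least $1-\delta$. The sample-size hypothesis $N \geq (6+4\sqrt{2})(\sqrt{n}+t)^2$ rewrites cleanly as $\sqrt{N} \geq (2+\sqrt{2})(\sqrt{n}+t)$, because $\sqrt{6+4\sqrt{2}} = 2+\sqrt{2}$. This is the single numerical fact that drives both conclusions.

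For the lower bound, substitute to get
\begin{equation*}
\sqrt{N} - \sqrt{n} - t \geq (1+\sqrt{2})(\sqrt{n}+t),
\end{equation*}
so $\sigma_{\min}(G)^2 \geq (3+2\sqrt{2})(\sqrt{n}+t)^2 = \tfrac{1}{2}(6+4\sqrt{2})(\sqrt{n}+t)^2 \geq N/2$, which combined with the deterministic inequality above yields~\eqref{lower bound for gaussian matrix}. For the upper bound, the same hypothesis gives $\sqrt{n}+t \leq \sqrt{N}/(2+\sqrt{2})$, hence
\begin{equation*}
\sqrt{N}+\sqrt{n}+t \leq \sqrt{N}\cdot\frac{3+\sqrt{2}}{2+\sqrt{2}} = \sqrt{N}\cdot\frac{4-\sqrt{2}}{2},
\end{equation*}
and squaring yields $\|G\|^2 \leq \tfrac{9 - 4\sqrt{2}}{2}\, N \leq \sqrt{3}\, N$ (a one-line numerical check since $(9-4\sqrt{2})/2 \approx 1.67 < 1.73 \approx \sqrt{3}$), which gives~\eqref{upper bound for gaussian matrix}.

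The only non-routine step is pinning down the right constant $6+4\sqrt{2}$: one must verify that it simultaneously feeds the lower bound (through $(\sqrt{N}-\sqrt{n}-t)^2 \geq N/2$) and the upper bound (through $(\sqrt{N}+\sqrt{n}+t)^2 \leq \sqrt{3}N$). The algebraic identity $\sqrt{6+4\sqrt{2}} = 2+\sqrt{2}$ is what makes both inequalities fall out of the same hypothesis, and this is the only place where care is needed; everything else is a direct invocation of Lemma~\ref{Two-sided bound on Gaussian matrices} applied to the whitened matrix $G$.
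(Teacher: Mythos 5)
Your overall route is the same as the paper's: whiten the Gaussians, reduce $\lambda_{\min}(M)$ and $\|M\|$ deterministically to $\sigma_{\min}(G)^2\lambda_{\min}(\Sigma_x)$ and $\|G\|^2\|\Sigma_x\|$ for a standard Gaussian matrix $G$, and invoke Lemma~\ref{Two-sided bound on Gaussian matrices} with $t=\sqrt{2\log(1/\delta)}$. The reduction and the upper-bound computation are correct, and your final constant $(9-4\sqrt{2})/2$ is exactly the paper's $(2-1/\sqrt{2})^2$, so \eqref{upper bound for gaussian matrix} is fully justified.

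There is, however, a reversed inequality in your lower-bound chain. You conclude with $\sigma_{\min}(G)^2 \geq (3+2\sqrt{2})(\sqrt{n}+t)^2 = \tfrac{1}{2}(6+4\sqrt{2})(\sqrt{n}+t)^2 \geq N/2$, but the hypothesis $N \geq (6+4\sqrt{2})(\sqrt{n}+t)^2$ provides an \emph{upper} bound on $(6+4\sqrt{2})(\sqrt{n}+t)^2$ in terms of $N$, not a lower one; the asserted ``$\geq N/2$'' holds only at the threshold and fails strictly for any larger $N$, so as written the step would fail. The conclusion is nevertheless true: argue multiplicatively in $\sqrt{N}$ rather than in $\sqrt{n}+t$. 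From $\sqrt{n}+t \leq \sqrt{N}/(2+\sqrt{2})$ (which you already derive for the upper bound) one gets
\begin{equation*}
\sqrt{N}-\sqrt{n}-t \;\geq\; \sqrt{N}\Bigl(1-\tfrac{1}{2+\sqrt{2}}\Bigr) \;=\; \tfrac{\sqrt{N}}{\sqrt{2}},
\end{equation*}
hence $\sigma_{\min}(G)^2 \geq N/2$, which is precisely the paper's step \eqref{corollary-3}. With that one-line repair your argument is complete and coincides with the paper's proof.
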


\begin{proof}
    We can rewrite $x_i = \Sigma_x^{1/2}y_i$, where $y_i \in \mathbb{R}^n$ is Gaussian with mean zero and unit covariance. Thus, we have $M = \Sigma_x^{1/2} \left(\sum_{i=1}^N y_i y_i^{\top}\right) \Sigma_x^{1/2}$, then it follows that 
    \begin{equation}\label{corollary-1}
        \lambda_{\min }(M) \geq \lambda_{\min }\left(\Sigma_x\right) \lambda_{\min } \left( \sum_{i=1}^N y_i y_i^{\top}\right)
    \end{equation}
    and
    \begin{equation}\label{corollary-2}
        \| M \| \leq \| \Sigma_x\| \left\| \sum_{i=1}^N y_i y_i^{\top} \right\|.
    \end{equation}
    Applying Lemma~\ref{Two-sided bound on Gaussian matrices}, with probability at least $1-\delta$, we have that 
    \begin{equation}\label{corollary-3}
        \lambda_{\min } \left( \sum_{i=1}^N y_i y_i^{\top}\right) \geq 
        \left( \sqrt{N} - \sqrt{n} - \sqrt{2\log(1/\delta)} \right)^2 \geq \frac{N}{2}.
    \end{equation}
    Combing \eqref{corollary-1} with \eqref{corollary-3} completes the proof of \eqref{lower bound for gaussian matrix}.
    Applying Lemma~\ref{Two-sided bound on Gaussian matrices} again, with probability at least $1-\delta$, we have that 
    \begin{multline}\label{corollary-4}
        \left\| \sum_{i=1}^N y_i y_i^{\top} \right\| \leq \left( \sqrt{N} + \sqrt{n} +\sqrt{2\log(1/\delta)} \right)^2 \\
        \leq \left(2-\frac{1}{\sqrt{2}}\right)^2N < \sqrt{3}N.
    \end{multline}
    Combing \eqref{corollary-2} with \eqref{corollary-4} completes the proof of \eqref{upper bound for gaussian matrix}.
\end{proof}

\begin{lemma}[Proposition III.1,\cite{matni2019tutorial}]\label{xw}
	Let $x_i \in \mathbb{R}^n$ and $w_i \in \mathbb{R}^m$ be such that $x_i \in \mathbb{R}^n$ be independently and identically distributed Gaussian with mean zero and covariance matrix $\Sigma_x$, and $w_i \in \mathbb{R}^n$ be independently and identically distributed Gaussian with mean zero and covariance matrix $\Sigma_w$, and let $M=$ $\sum_{i=1}^N x_i w_i^{\top}$. Fix a failure probability $\delta \in(0,1]$, and let $N \geq$ $\frac{1}{2}(n+m) \log (9 / \delta)$. Then, it holds with probability at least $1-\delta$ that
	\begin{equation}
		\|M\| \leq 4\left\|\Sigma_x\right\|^{1 / 2}\left\|\Sigma_w\right\|^{1 / 2} \sqrt{N(n+m) \log (9 / \delta)}.
	\end{equation}
\end{lemma}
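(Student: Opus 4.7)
The plan is to reduce to the isotropic case and then apply a covering argument combined with a sub-exponential concentration inequality. First, write $x_i = \Sigma_x^{1/2} \tilde x_i$ and $w_i = \Sigma_w^{1/2} \tilde w_i$, where $\tilde x_i \sim \mathcal{N}(\mathbf{0}, \mathbf{I}_n)$ and $\tilde w_i \sim \mathcal{N}(\mathbf{0}, \mathbf{I}_m)$ are independent standard Gaussians. Then $M = \Sigma_x^{1/2} \widetilde M \Sigma_w^{1/2}$ with $\widetilde M = \sum_{i=1}^N \tilde x_i \tilde w_i^\top$, so $\|M\| \leq \|\Sigma_x\|^{1/2} \|\Sigma_w\|^{1/2} \|\widetilde M\|$, and it suffices to bound $\|\widetilde M\|$ by $4\sqrt{N(n+m)\log(9/\delta)}$ with probability at least $1-\delta$.

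Next, I would use the variational characterization
\begin{equation}
\|\widetilde M\| = \sup_{u \in S^{n-1},\, v \in S^{m-1}} u^\top \widetilde M v = \sup_{u,v} \sum_{i=1}^N (u^\top \tilde x_i)(\tilde w_i^\top v).
\end{equation}
For fixed unit vectors $u, v$, the scalars $a_i = u^\top \tilde x_i$ and $b_i = \tilde w_i^\top v$ are independent $\mathcal{N}(0,1)$ random variables, so each product $a_i b_i$ is a mean-zero sub-exponential random variable with sub-exponential norm of order $1$. By Bernstein's inequality for sums of independent sub-exponentials, there is an absolute constant $c > 0$ such that for any $t \geq 0$,
\begin{equation}
\mathbb{P}\left[ \left| \sum_{i=1}^N a_i b_i \right| \geq t \right] \leq 2 \exp\!\left( -c \min\!\left( \tfrac{t^2}{N}, t \right) \right).
\end{equation}

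To lift this pointwise bound to the supremum, I would take $1/4$-nets $\mathcal{N}_n \subset S^{n-1}$ and $\mathcal{N}_m \subset S^{m-1}$ of cardinalities at most $9^n$ and $9^m$, respectively, and use the standard discretization estimate $\|\widetilde M\| \leq 2 \sup_{u \in \mathcal{N}_n, v \in \mathcal{N}_m} u^\top \widetilde M v$. A union bound over $|\mathcal{N}_n| \cdot |\mathcal{N}_m| \leq 9^{n+m}$ pairs, combined with the Bernstein tail with $t = C\sqrt{N(n+m)\log(9/\delta)}$ for a suitable constant $C$, yields failure probability at most $\delta$, provided the condition $N \geq \tfrac{1}{2}(n+m)\log(9/\delta)$ is satisfied so that the quadratic branch of Bernstein is active (i.e., $t \leq N$).

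The main obstacle is tracking the constants precisely enough to end up with the explicit prefactor $4$ in the bound. This requires carefully choosing the net resolution ($1/4$ versus a coarser or finer grid), verifying that the sub-exponential norm of $a_i b_i$ admits the correct numerical constant for products of two standard Gaussians, and checking that the sample-size condition ensures we are in the Gaussian tail regime of Bernstein rather than the sub-exponential tail regime. Aside from this constant-chasing, the argument is a textbook covering + concentration proof, and the stated form of the lemma matches Proposition III.1 of~\cite{matni2019tutorial}, from which it is directly cited.
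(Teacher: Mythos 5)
The paper does not prove this lemma at all: it is imported verbatim as Proposition~III.1 of the cited tutorial by Matni and Tu, so there is no in-paper proof to compare against. Your outline does match the skeleton of the proof in that reference --- whitening via $x_i = \Sigma_x^{1/2}\tilde x_i$, $w_i = \Sigma_w^{1/2}\tilde w_i$, the variational characterization of the spectral norm, $1/4$-nets of cardinality at most $9^{n}$ and $9^{m}$ with the factor-$2$ discretization loss, and a union bound over the $9^{n+m}$ pairs (which is where the $\log(9/\delta)$ comes from). Where you diverge is the concentration step, and that divergence is exactly where the lemma's explicit numbers live.

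The reference does not treat $\sum_i (u^\top\tilde x_i)(\tilde w_i^\top v)$ as a sum of sub-exponential products. Writing $F$ and $G$ for the $N\times n$ and $N\times m$ matrices with rows $\tilde x_i^\top$ and $\tilde w_i^\top$, one has $u^\top\widetilde M v = (Fu)^\top(Gv)$, which \emph{conditioned on $F$} is exactly Gaussian with variance $\|Fu\|^2 \le \|F\|^2$; the scalar Gaussian tail (with its exact constant) is then combined with a high-probability operator-norm bound on $\|F\|$ of the type in Lemma~\ref{Two-sided bound on Gaussian matrices}, and this is what produces the prefactor $4$ and the threshold $N \ge \tfrac{1}{2}(n+m)\log(9/\delta)$. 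Your Bernstein route is valid in spirit but, as you concede, only yields $\|M\| \le C\|\Sigma_x\|^{1/2}\|\Sigma_w\|^{1/2}\sqrt{N(n+m)\log(9/\delta)}$ for an unspecified absolute constant $C$ depending on the sub-exponential norm of a product of standard Gaussians and on the Bernstein constant; moreover, your claim that the stated sample-size condition puts you in the quadratic branch requires $t = C\sqrt{N(n+m)\log(9/\delta)} \le N$, i.e.\ $N \ge C^2(n+m)\log(9/\delta)$, which matches the hypothesis only if $C \le 1/\sqrt{2}$ --- incompatible with the $C=2$ you would need before the factor-$2$ net loss. Since the explicit constant $4$ propagates into $\mathcal{C}_1$--$\mathcal{C}_3$ and the quantitative theorems of the paper, this constant-tracking is not cosmetic: to prove the lemma \emph{as stated} you should switch to the conditioning argument rather than generic sub-exponential Bernstein.
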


\begin{lemma}[Lemma 7.4,\cite{oymak2021revisiting}]\label{sigma}
	For the true block Hankel matrix $\mathcal{H}_T$ defined in Section~\ref{sec:sim} and the noisy estimate $\widehat{\mathcal{H}}_T$ defined in~\eqref{regression problem}, suppose $\sigma_{\min}(\mathcal{H}_T) \geq 2 \| \mathcal{H}_T - \widehat{\mathcal{H}}_T \|$. Then, $\|  \widehat{\mathcal{H}}_T  \| \leq 2\| \widehat{H}_T \|$ and $\sigma_{\min}(\widehat{\mathcal{H}}_T) \geq \sigma_{\min}(\mathcal{H}_T)/2$.
\end{lemma}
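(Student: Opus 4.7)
The plan is to derive both bounds from Weyl's inequality for singular values, which asserts that for any two matrices $A, B$ of the same dimensions and any index $i$, one has $|\sigma_i(A) - \sigma_i(B)| \leq \|A - B\|$. This is the only tool needed; the statement is essentially a stability-of-singular-values argument tailored to the regime where the perturbation is smaller than half the smallest nonzero singular value of the unperturbed matrix.

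First I would handle the upper bound on $\|\widehat{\mathcal{H}}_T\|$. Setting $i = 1$ in Weyl's inequality gives
\begin{equation*}
\|\widehat{\mathcal{H}}_T\| = \sigma_1(\widehat{\mathcal{H}}_T) \leq \sigma_1(\mathcal{H}_T) + \|\widehat{\mathcal{H}}_T - \mathcal{H}_T\| = \|\mathcal{H}_T\| + \|\widehat{\mathcal{H}}_T - \mathcal{H}_T\|.
\end{equation*}
Since $\sigma_{\min}(\mathcal{H}_T) \leq \sigma_{\max}(\mathcal{H}_T) = \|\mathcal{H}_T\|$, the hypothesis $\|\widehat{\mathcal{H}}_T - \mathcal{H}_T\| \leq \sigma_{\min}(\mathcal{H}_T)/2$ yields $\|\widehat{\mathcal{H}}_T - \mathcal{H}_T\| \leq \|\mathcal{H}_T\|/2$, so $\|\widehat{\mathcal{H}}_T\| \leq \tfrac{3}{2}\|\mathcal{H}_T\| \leq 2\|\mathcal{H}_T\|$.

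Next I would prove the lower bound on the smallest nonzero singular value. Under Assumption~\ref{observable and controllable}, $\mathcal{H}_T = \Gamma_T \mathcal{K}_T$ has rank exactly $n$, so $\sigma_{\min}(\mathcal{H}_T) = \sigma_n(\mathcal{H}_T)$. Applying Weyl's inequality with $i = n$,
\begin{equation*}
\sigma_n(\widehat{\mathcal{H}}_T) \geq \sigma_n(\mathcal{H}_T) - \|\widehat{\mathcal{H}}_T - \mathcal{H}_T\| \geq \sigma_n(\mathcal{H}_T) - \tfrac{1}{2}\sigma_n(\mathcal{H}_T) = \tfrac{1}{2}\sigma_{\min}(\mathcal{H}_T).
\end{equation*}
In particular $\sigma_n(\widehat{\mathcal{H}}_T) > 0$, so this is also a bound on $\sigma_{\min}(\widehat{\mathcal{H}}_T)$ in the paper's sense (smallest nonzero singular value), completing the lemma.

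There is no real obstacle here: the only subtlety worth flagging is the convention that $\sigma_{\min}$ denotes the smallest nonzero singular value, which requires the brief rank argument above to identify $\sigma_{\min}(\mathcal{H}_T)$ with $\sigma_n(\mathcal{H}_T)$ before Weyl's inequality can be applied index-by-index. Everything else is a direct substitution of the perturbation hypothesis.
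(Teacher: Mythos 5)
Your Weyl-inequality argument is the standard proof of this fact, and the paper itself offers nothing to compare against: Lemma~\ref{sigma} is imported verbatim from \cite{oymak2021revisiting} without proof, and the proof there is the same two applications of $|\sigma_i(A)-\sigma_i(B)|\leq\|A-B\|$ at $i=1$ and $i=n$ that you give. Both steps are correct (your first bound even gives the sharper constant $3/2$).

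One inference in your closing remark is backwards, though. Your Weyl step proves $\sigma_n(\widehat{\mathcal{H}}_T)\geq\sigma_n(\mathcal{H}_T)/2$, but positivity of $\sigma_n(\widehat{\mathcal{H}}_T)$ does \emph{not} make it the smallest nonzero singular value of $\widehat{\mathcal{H}}_T$: the noisy estimate $\widehat{\mathcal{H}}_T=Y_fY_p^\dagger$ is generically of full rank $Tm>n$, so under the paper's stated convention $\sigma_{\min}(\widehat{\mathcal{H}}_T)=\sigma_{Tm}(\widehat{\mathcal{H}}_T)\leq\sigma_n(\widehat{\mathcal{H}}_T)$, and no lower bound on $\sigma_{Tm}$ follows (nor could one, since $\sigma_{Tm}$ can be arbitrarily small). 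This is really a defect in how the lemma is transcribed rather than in your mathematics: the quantity the paper actually uses downstream (via $\sigma_n(\widehat{\Gamma}_{T,p})=\sqrt{\sigma_n(\widehat{\mathcal{H}}_T)}$ in the proof of Lemma~\ref{the robustness of balanced realizations}) is precisely $\sigma_n(\widehat{\mathcal{H}}_T)$, which your argument does control. So state the conclusion as a bound on $\sigma_n(\widehat{\mathcal{H}}_T)$ and drop the final sentence, and the proof is complete.
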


%

\begin{lemma}[Lemma 4, \cite{sun2023finite}]\label{krylov matrix}
	Given an $ n \times mp $ ($ p \leq  n $) matrix $ X_{n,mp} $ satisfying 
	\begin{equation}\notag
		X_{n,mp} = \begin{bmatrix}
			J_p &
			DJ_p &
			\cdots &
			D^{m-1}J_p
		\end{bmatrix},
	\end{equation}	
	where $ J_p \in \mathbb{R}^{n \times p} $ and $ D $ is a unitary diagonalizable matrix with real eigenvalues, the smallest singular value of $ X_{n,mp} $ satisfies
	\begin{equation}\notag
		\sigma_{\min}(X_{n,mp}) \leq 4 \rho_{}^{-\frac{\left\lfloor \frac{\min\{n,m(p-[p]_*)\}-1}{2p} \right\rfloor}{\log (2mp)}} \|X_{n,mp}\|, 
	\end{equation}
	where $ \rho \triangleq e^{\frac{\pi^2}{4}} $, and $ [p]_* = 0 $ if $ p $ is even or $ p = 1 $ and is $ 1 $ if $ p $ is an odd number greater than $ 1 $.
\end{lemma}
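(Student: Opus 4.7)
The plan is to reduce the smallest-singular-value bound to a polynomial approximation problem on the spectrum of $D$. Since $D$ is unitarily diagonalizable with real eigenvalues, write $D = U \Lambda U^H$ with $\Lambda = \mathrm{diag}(\lambda_1, \ldots, \lambda_n)$ and $\lambda_i \in \mathbb{R}$. Left multiplication by $U^H$ preserves all singular values, so we may assume $D = \Lambda$ is diagonal; the $i$-th row of $X_{n,mp}$ then equals the Kronecker product of the $i$-th row of $J_p$ with the Vandermonde row $[1, \lambda_i, \ldots, \lambda_i^{m-1}]$.

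For any polynomial $q(x) = \sum_{k=0}^{m-1} q_k x^k$ of degree at most $m-1$ and any vector $v \in \mathbb{R}^p$, define $w = (q_0 v^\top, q_1 v^\top, \ldots, q_{m-1} v^\top)^\top \in \mathbb{R}^{mp}$. A direct computation gives $X_{n,mp} w = q(\Lambda) J_p v$ with $\|w\|_2 = \|q\|_2 \|v\|_2$, where $\|q\|_2 = \bigl(\sum_k q_k^2\bigr)^{1/2}$. By the variational characterization of $\sigma_{\min}$,
\[
\sigma_{\min}(X_{n,mp}) \leq \frac{\|X_{n,mp} w\|_2}{\|w\|_2} \leq \frac{\max_i |q(\lambda_i)|}{\|q\|_2} \cdot \frac{\|J_p v\|_2}{\|v\|_2},
\]
and since $\|J_p v\|_2/\|v\|_2 \leq \|J_p\| \leq \|X_{n,mp}\|$, the task reduces to constructing a polynomial $q$ of degree at most $m-1$ whose sup-norm on the set $\{\lambda_i\}$ is as small as possible relative to its $\ell_2$-coefficient norm.

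Next, I would invoke a discrete Chebyshev / Zolotarev-type extremal polynomial estimate. After rescaling to $\lambda_i \in [-1, 1]$ (which only affects the constant), one exploits the classical fact that
\[
\inf_{\deg q \leq d} \frac{\max_{x \in [-1,1]} |q(x)|}{\|q\|_2} \lesssim \rho^{-d/\log d}, \qquad \rho = e^{\pi^2/4},
\]
which is the Zolotarev-type bound underlying the stated rate. To match the exponent in the lemma, I would group the $mp$ columns of $X_{n,mp}$ into blocks of size roughly $2p$ and use, for each block, a polynomial whose effective degree is $\lfloor(\min\{n, m(p-[p]_*)\}-1)/(2p)\rfloor$. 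The parity correction $[p]_*$ handles the case where an odd $p$ prevents a fully symmetric column pairing, while the $\min$ with $n$ reflects the fact that one cannot annihilate more modes than the ambient dimension allows.

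The main obstacle is producing the extremal polynomial with sharp simultaneous control of both its supremum norm on the spectrum and its $\ell_2$-coefficient norm. I would build $q$ from a scaled and shifted Chebyshev polynomial on $[-1,1]$ and bound its coefficient norm via Bernstein's and Markov brothers' inequalities; this is where the $\log(2mp)$ denominator enters, through a standard estimate of the $\ell_2$-norm of Chebyshev coefficients up to degree $d$ sampled on a discrete node set of size $\mathcal{O}(mp)$. Alternatively one can follow the constructive argument in \cite{sun2023finite} verbatim: partition the spectrum into sub-intervals, design a polynomial that is small on the spectrum but has a large leading coefficient, and conclude by the variational bound above, thereby recovering the multiplicative constant $4$ and the explicit exponent.
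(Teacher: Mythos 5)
The paper never proves this statement---it is imported wholesale as Lemma~4 of \cite{sun2023finite}---so there is no in-paper argument to compare against and your attempt must stand on its own. It has a genuine gap at the central step. In the regime where the lemma is used ($mp\geq n$, e.g.\ $X_{n,mp}=\Gamma_T^\top$ with $mp=Tm\geq n$), $X_{n,mp}$ is a \emph{wide} matrix and $\sigma_{\min}$ denotes its smallest nonzero singular value, i.e.\ $\sigma_n(X_{n,mp})$. For a wide matrix, $\min_{\|w\|_2=1}\|X_{n,mp}w\|_2=\sqrt{\lambda_{\min}(X_{n,mp}^\top X_{n,mp})}=0$ because the null space is nontrivial, so exhibiting a single right test vector $w$ with $\|X_{n,mp}w\|_2$ small gives no information about $\sigma_n$. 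By Courant--Fischer you would need an entire $(mp-n+1)$-dimensional subspace on which $X_{n,mp}$ is uniformly small (equivalently, by Eckart--Young, an explicit rank-$(n-1)$ matrix close to $X_{n,mp}$), and your construction produces only one direction per polynomial $q$ and per $v$. This is exactly why the proof in the cited source does not go through test vectors: it exploits the displacement structure $DX_{n,mp}-X_{n,mp}Q=(\text{low rank, at most }2p)$ for a circulant-type shift $Q$, and then applies the Zolotarev-number bound for singular values of matrices with low displacement rank, yielding an inequality of the form $\sigma_{j+2pk}(X_{n,mp})\leq 4\rho^{-k/\log(2mp)}\sigma_j(X_{n,mp})$; the factor $2p$ in the floor is the displacement rank, and the $\min\{n,m(p-[p]_*)\}$ and the parity term $[p]_*$ come out of that bookkeeping, not out of any column grouping.

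Even setting the test-vector issue aside, the approximation-theoretic estimate you invoke is not correct as stated. For polynomials of degree $d$ on $[-1,1]$ the extremal ratio $\max_{x\in[-1,1]}|q(x)|/\|q\|_2$ is governed by Chebyshev polynomials and decays like $2^{-d}$---purely exponentially, with no $\log d$ correction and with base $2$, not $e^{\pi^2/4}\approx 11.79$. The constant $\rho=e^{\pi^2/4}$ together with the $1/\log(2mp)$ in the exponent is the signature of \emph{Zolotarev's rational} approximation problem (real interval versus the spectrum of the shift matrix $Q$), which cannot be reached by a polynomial Chebyshev construction plus Bernstein/Markov coefficient bounds. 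Your proposed fix of ``grouping the columns into blocks of size roughly $2p$'' is not a proof step. If you want a self-contained argument, follow the displacement/Zolotarev route rather than the polynomial test-vector route.
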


\begin{theorem}[Theorem 1, \cite{elsner1985optimal}]\label{perturbation of eigenvalues}
	Given $\mathcal{A} = (\alpha_{ij}) \in \mathbb{C}^{n \times n}$, $\mathcal{B} = (\beta_{ij}) \in \mathbb{C}^{n \times n}$, suppose that $\lambda(\mathcal{A}) =  \{\lambda_1(\mathcal{A}), \cdots, \lambda_n(\mathcal{A})\}$ and $\lambda(\mathcal{B}) = \{\mu_1(\mathcal{B}), \cdots, \mu_n(\mathcal{B})\}$ are the spectra of matrices $A$ and $B$ respectively, then
	\begin{equation}\notag
		{\rm sv}_{\mathcal{A}}(\mathcal{B}) \leq  \left( \| \mathcal{A} \| + \|\mathcal{B}\| \right)^{1-\frac{1}{n}} \| \mathcal{A} - \mathcal{B} \|^{\frac{1}{n}}.
	\end{equation}
\end{theorem}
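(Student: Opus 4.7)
The plan is to reduce the two-sided spectrum variation statement to a one-sided one: for each fixed eigenvalue $\mu = \mu_j(\mathcal{B})$ I would show $\min_i |\lambda_i(\mathcal{A}) - \mu| \leq (\|\mathcal{A}\|+\|\mathcal{B}\|)^{1-1/n}\|\mathcal{A}-\mathcal{B}\|^{1/n}$, and then maximize over $j$ to obtain the bound on ${\rm sv}_{\mathcal{A}}(\mathcal{B})$. The starting point is the observation that if $v\in\mathbb{C}^n$ is a unit eigenvector of $\mathcal{B}$ corresponding to $\mu$, then $(\mathcal{A}-\mu I)v = (\mathcal{A}-\mathcal{B})v$, so $\|(\mathcal{A}-\mu I)v\| \leq \|\mathcal{A}-\mathcal{B}\|$. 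This is the only place where the perturbation $\mathcal{A}-\mathcal{B}$ directly enters the argument, and it provides just one directional estimate.

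To amplify this single directional estimate into a bound on the determinant $\det(\mathcal{A}-\mu I)$ — which is the quantity that packages all $n$ eigenvalue gaps $\lambda_i(\mathcal{A})-\mu$ — I would invoke Hadamard's inequality. Concretely, extend $v$ to an orthonormal basis $\{v,u_1,\dots,u_{n-1}\}$ of $\mathbb{C}^n$ and let $U$ be the unitary matrix with these columns. Since $|\det U|=1$, Hadamard applied to the columns of $(\mathcal{A}-\mu I)U$ yields $|\det(\mathcal{A}-\mu I)| \leq \|(\mathcal{A}-\mu I)v\|\cdot \prod_{i=1}^{n-1}\|(\mathcal{A}-\mu I)u_i\|$. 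The first factor is at most $\|\mathcal{A}-\mathcal{B}\|$ by the previous paragraph, and each of the remaining $n-1$ factors is bounded crudely by $\|\mathcal{A}-\mu I\| \leq \|\mathcal{A}\|+|\mu| \leq \|\mathcal{A}\|+\|\mathcal{B}\|$, where $|\mu|\leq\|\mathcal{B}\|$ because $\mu$ is an eigenvalue of $\mathcal{B}$. Multiplying gives $|\det(\mathcal{A}-\mu I)| \leq \|\mathcal{A}-\mathcal{B}\|\,(\|\mathcal{A}\|+\|\mathcal{B}\|)^{n-1}$.

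To finish, I would use the characteristic-polynomial factorization $\det(\mathcal{A}-\mu I) = \prod_{i=1}^n(\lambda_i(\mathcal{A})-\mu)$, which gives $\min_i |\lambda_i(\mathcal{A})-\mu|^n \leq \prod_i|\lambda_i(\mathcal{A})-\mu| = |\det(\mathcal{A}-\mu I)|$. Combining with the determinant bound and taking $n$-th roots produces the one-sided estimate $\min_i |\lambda_i(\mathcal{A})-\mu|\leq (\|\mathcal{A}\|+\|\mathcal{B}\|)^{1-1/n}\|\mathcal{A}-\mathcal{B}\|^{1/n}$. Maximizing over $\mu\in\lambda(\mathcal{B})$ yields the claim, since ${\rm sv}_{\mathcal{A}}(\mathcal{B}) = \max_j\min_i|\lambda_i(\mathcal{A})-\mu_j(\mathcal{B})|$.

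I do not expect a serious obstacle here: the proof is essentially classical Elsner. The only non-obvious move is the Hadamard-based amplification — turning the one directional inequality $\|(\mathcal{A}-\mu I)v\|\leq \|\mathcal{A}-\mathcal{B}\|$ into an upper bound on $|\det(\mathcal{A}-\mu I)|$, at the cost of $n-1$ "trivial" factors $\|\mathcal{A}\|+\|\mathcal{B}\|$. This trade-off is what produces the characteristic exponents $1-1/n$ and $1/n$. Care is needed to verify that the bound $\|(\mathcal{A}-\mu I)u_i\|\leq \|\mathcal{A}\|+\|\mathcal{B}\|$ does not accidentally use any structure of the eigenvector $v$, but since $u_i$ is just a unit vector, this is immediate and the bound is sharp up to constants.
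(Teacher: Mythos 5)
Your argument is correct and is essentially Elsner's own classical proof: the identity $(\mathcal{A}-\mu I)v=(\mathcal{A}-\mathcal{B})v$ for a unit eigenvector $v$ of $\mathcal{B}$, Hadamard's inequality applied to the columns of $(\mathcal{A}-\mu I)U$ with $U$ unitary extending $v$, the crude bound $\|\mathcal{A}-\mu I\|\leq\|\mathcal{A}\|+\|\mathcal{B}\|$ on the remaining $n-1$ factors, and the factorization $|\det(\mathcal{A}-\mu I)|=\prod_i|\lambda_i(\mathcal{A})-\mu|$ all fit together exactly as you describe. The paper itself states this result as a citation to Elsner (1985) and gives no proof, so there is nothing in the paper to diverge from; your reconstruction matches the cited source's argument.
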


Based on the definition of the Hausdorff distance and Theorem~\ref{perturbation of eigenvalues}, the following corollary can be directly derived.

\begin{corollary}\label{hausdorff_distance}
	Given $\mathcal{A} = (\alpha_{ij}) \in \mathbb{C}^{n \times n}$, $\mathcal{B} = (\beta_{ij}) \in \mathbb{C}^{n \times n}$, suppose that $\lambda(\mathcal{A}) =  \{\lambda_1(\mathcal{A}), \cdots, \lambda_n(\mathcal{A})\}$ and $\lambda(\mathcal{B}) = \{\mu_1(\mathcal{B}), \cdots, \mu_n(\mathcal{B})\}$ are the spectra of matrix $A$ and $B$ respectively, then the Hausdorff distance between the spectra of matrix $\mathcal{A}$ and $\mathcal{B}$ satisfy
	\begin{equation}\notag
		d_{\rm H}(\mathcal{A},\mathcal{B}) \leq  \left( \| \mathcal{A} \| + \|\mathcal{B}\| \right)^{1-\frac{1}{n}} \| \mathcal{A} - \mathcal{B} \|^{\frac{1}{n}}.
	\end{equation}
\end{corollary}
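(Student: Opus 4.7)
The plan is to prove Corollary~\ref{hausdorff_distance} by a direct reduction to Theorem~\ref{perturbation of eigenvalues}, exploiting the symmetry of the right-hand side in the roles of $\mathcal{A}$ and $\mathcal{B}$. The Hausdorff distance is defined as the maximum of the two one-sided spectrum variations ${\rm sv}_{\mathcal{A}}(\mathcal{B})$ and ${\rm sv}_{\mathcal{B}}(\mathcal{A})$, so it suffices to bound each of these quantities by the same expression.

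First, I would apply Theorem~\ref{perturbation of eigenvalues} as stated, which gives
\begin{equation}\notag
    {\rm sv}_{\mathcal{A}}(\mathcal{B}) \leq \left(\|\mathcal{A}\| + \|\mathcal{B}\|\right)^{1-\frac{1}{n}} \|\mathcal{A} - \mathcal{B}\|^{\frac{1}{n}}.
\end{equation}
Next, I would invoke the same theorem a second time with the roles of $\mathcal{A}$ and $\mathcal{B}$ interchanged. This yields
\begin{equation}\notag
    {\rm sv}_{\mathcal{B}}(\mathcal{A}) \leq \left(\|\mathcal{B}\| + \|\mathcal{A}\|\right)^{1-\frac{1}{n}} \|\mathcal{B} - \mathcal{A}\|^{\frac{1}{n}}.
\end{equation}
The right-hand side is invariant under swapping $\mathcal{A}$ and $\mathcal{B}$, because the sum of spectral norms is symmetric and $\|\mathcal{A} - \mathcal{B}\| = \|\mathcal{B} - \mathcal{A}\|$. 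Therefore both one-sided quantities admit the identical upper bound.

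Finally, taking the maximum of the two bounds and using the definition $d_{\rm H}(\mathcal{A},\mathcal{B}) = \max\{{\rm sv}_{\mathcal{A}}(\mathcal{B}), {\rm sv}_{\mathcal{B}}(\mathcal{A})\}$ delivers the claimed inequality. There is no real obstacle here: the corollary is essentially a rewriting of Theorem~\ref{perturbation of eigenvalues}, where the only nontrivial observation is that Elsner's bound is symmetric under exchange of the two matrices, so applying it twice immediately upgrades the one-sided spectral variation bound into a bound on the full Hausdorff distance.
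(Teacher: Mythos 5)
Your proposal is correct and matches the paper's intent exactly: the paper gives no explicit proof, simply stating that the corollary follows directly from the definition of the Hausdorff distance and Elsner's theorem, which is precisely the two-sided application and symmetry observation you spell out. No gaps.
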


This corollary indicates that the Hausdorff distance between the spectra of matrix $\mathcal{A}$ and $\mathcal{B}$ can be controlled by the distance between matrix $\mathcal{A}$ and $\mathcal{B}$ in the sense of matrix norm.

\begin{lemma}\label{upper bound of Gamma and J}
    If the~\eqref{linear_system} is stable (or marginally stable), then the (extended) observability matrix $\Gamma_T$ (as defined in~\eqref{observability matrix}), the reversed (extended) controllability matrix $\mathcal K_T$ (as defined in~\eqref{controllability matrix}), the block Hankel matrix $\mathcal H_T$ (as defined in~\eqref{hankel matrix}), and the block Toeplitz matrix $\mathcal J_T$ (as defined in~\eqref{toeplitz}) satisfy
    \begin{equation}
        \begin{aligned}
        \| \Gamma_T \| \leq  \overline{c} \sqrt{Tmn}, \qquad
        \| \mathcal{K}_T \| \leq \overline{k} \sqrt{Tmn}, \\
        \| \mathcal{H}_T \|  \leq \overline{c}\overline{k}Tmn, \quad
        \| \mathcal{J}_T \| \leq Tm(1+\overline{c}\overline{k}n),
        \end{aligned}
    \end{equation}
    where $\overline{c} \triangleq \max_{i,j} |c_{ij}|$ and $\overline{k} \triangleq \max_{i,j} |k_{ij}|$ denote the maximum absolute values of the entries of matrices $C$ and $K$, respectively.
\end{lemma}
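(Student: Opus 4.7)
The plan is to reduce each of the four spectral-norm bounds to a sum or product of operator norms of single-step building blocks, and then invoke the marginal stability of $A$ (Assumption~\ref{observable and controllable}.2) together with the Kalman-filter stability $\rho(A_C) < 1$ to control the matrix powers. Throughout I would rely on the entry-wise bounds $\|C\| \le \|C\|_{\rm F} \le \overline{c}\sqrt{mn}$ and $\|K\| \le \|K\|_{\rm F} \le \overline{k}\sqrt{mn}$, combined with the uniform power bounds $\|A^k\|,\|A_C^k\| \le 1$.

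For $\Gamma_T$ I would start from the identity
\[
\|\Gamma_T\|^2 = \Bigl\|\sum_{k=0}^{T-1}(CA^k)^\top (CA^k)\Bigr\| \le \sum_{k=0}^{T-1}\|CA^k\|^2,
\]
bound each summand by $\|C\|^2 \|A^k\|^2 \le \overline{c}^2 m n$, and conclude $\|\Gamma_T\| \le \overline{c}\sqrt{Tmn}$. The bound for $\mathcal{K}_T$ follows from an entirely analogous computation in which the roles of $(C,A)$ are interchanged with $(K, A_C)$, using $\|A_C^k\| \le 1$. The bound on $\mathcal{H}_T = \Gamma_T \mathcal{K}_T$ is then immediate by submultiplicativity: $\|\mathcal{H}_T\| \le \|\Gamma_T\|\|\mathcal{K}_T\| \le \overline{c}\,\overline{k}\,Tmn$.

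For $\mathcal{J}_T$ I would decompose $\mathcal{J}_T = I_{Tm} + L$, where $L$ is the block strictly lower triangular remainder carrying the Markov blocks $CA^{k-1}K$ on its $k$-th sub-diagonal. Viewing $L$ as a sum of block-shifted copies of its distinct entries, the triangle inequality gives $\|L\| \le \sum_{k=1}^{T-1}\|CA^{k-1}K\|$, and each term is controlled by $\|C\|\|A^{k-1}\|\|K\| \le \overline{c}\,\overline{k}\,mn$. Adding the identity contribution and applying the crude relaxation $1 + (T-1)\overline{c}\,\overline{k}\,mn \le Tm(1 + \overline{c}\,\overline{k}\,n)$ yields the claimed bound on $\|\mathcal{J}_T\|$.

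The main technical obstacle is rigorously justifying $\|A^k\| \le 1$ uniformly in $k$. Real distinct eigenvalues in $[-1,1]$ give a diagonalization $A = V\Lambda V^{-1}$, hence $\|A^k\| \le \kappa(V)$; the unit bound is immediate only when $V$ can be taken orthogonal (equivalently, $A$ normal). In the general case one either absorbs $\kappa(V)$ into the constants of the lemma, which leaves the $N,T,n,m$ scaling unaffected, or one argues via the Frobenius detour $\|CA^k\|_{\rm F} \le \|C\|_{\rm F}\|A^k\|$ and bounds $\|A^k\|$ by a stability constant supplied by the assumption. The analogous step for $A_C$ is easier, since $\rho(A_C) < 1$ strictly by the steady-state Kalman construction, so $\|A_C^k\|$ decays geometrically in $k$ and is in particular uniformly bounded.
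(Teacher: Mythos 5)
Your proof is correct and follows essentially the same route as the paper's: reduce each norm to single-block bounds $\|CA^k\|\le \overline{c}\sqrt{mn}$ and $\|A_C^kK\|\le \overline{k}\sqrt{mn}$ via $\|A^k\|,\|A_C^k\|\le 1$, use submultiplicativity for $\mathcal{H}_T$, and a crude relaxation for $\mathcal{J}_T$ (the paper bounds $\|\mathcal{J}_T\|$ by its full Frobenius norm rather than summing over block sub-diagonals, but both land on the same final expression). The caveat you raise about $\|A^k\|\le 1$ is shared by the paper, which simply asserts it from (marginal) stability even though it strictly requires $A$ to be power-bounded by one (e.g.\ normal), and the same issue arises for $\|A_C^k\|\le 1$ in the $\mathcal{K}_T$ bound; your suggestion of absorbing a condition-number constant is, if anything, more careful than the published argument.
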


\begin{proof}
    We first bound $\| \Gamma_T \|$. Noting that $\| A^k \| \leq 1$ for all $k$ due to the stability (or marginal stability) of $A$, we have
    \begin{equation}\notag
\begin{aligned}
    \| \Gamma_T \|^2 &\leq \| \Gamma_T \|_{\mathrm{F}}^2 = \sum_{k = 0}^{T-1} \| C A^k \|_{\mathrm{F}}^2 
    \leq \sum_{k = 0}^{T-1} \| C \|_{\mathrm{F}}^2 \| A^k \|^2 \\
    &\leq \sum_{k = 0}^{T-1} \overline{c}^2 m n = \overline{c}^2 Tm n.
\end{aligned}
\end{equation}
This gives $\| \Gamma_T \| \leq \overline c \sqrt{Tmn}$. The bound for $\mathcal K_T$ is derived analogously:
$\| \mathcal K_T \| \leq \overline k \sqrt{Tmn}$. Since $\mathcal H_T = \Gamma_T \mathcal K_T$, it follows that
$\| \mathcal H_T \| \leq \| \Gamma_T \| \cdot \| \mathcal K_T \| \leq \overline{c}\overline{k}Tmn$.

For $\mathcal J_T$, we have
\begin{equation}\notag
\begin{aligned}
    \| \mathcal{J}_T \|^2 &\leq \| \mathcal{J}_T \|_{\mathrm{F}}^2 = T \| \mathbf{I}_m \|_{\rm F}^2 + \sum_{k = 0}^{T-2} (T-1-k) \| CA^k K\|_{\rm F}^2\\
    &\leq Tm + \sum_{k = 0}^{T-2} (T-1-k) \| C \|_{\mathrm{F}}^2 \| K \|_{\mathrm{F}}^2  \| A^k \|^2 \\
    &\leq Tm + \sum_{k = 0}^{T-2} (T-1-k) \overline{c}^2 \overline{k}^2 m^2n^2 \\
    &= Tm + \frac{T(T-1)}{2} \overline{c}^2 \overline{k}^2 m^2n^2 \leq (Tm + \overline{c}\overline{k}Tmn)^2,
\end{aligned}
\end{equation}
where the third inequality uses the assumption that \( A \) is stable or marginally stable, so \( \| A^k \| \leq 1 \) for all \( k \). Taking square roots completes the proof.
\end{proof}

\subsection{The proof of Theorem~\ref{estimate error of hankel matrix}}

Based on~\eqref{error_1}, the following three steps are essential for deriving an upper bound on the estimation error $\|\widehat{\mathcal{H}}_T - \mathcal{H}_T\|$:
\begin{enumerate}
	\item Providing condition for the invertibility of $Y_pY_p^\top$, i.e., the persistent of excitation for the batch past outputs $Y_p$.
	\item Giving an upper bound for the cross-term error $\mathcal{J}_TE_f Y_p^\top (Y_pY_p^\top)^{-1} $.
	\item  Giving an upper bound for the Kalman filter truncation bias term $\Gamma_T A_C^T \widehat{X} Y_p^\top (Y_pY_p^\top)^{-1}$.
\end{enumerate}

Next, we will present the results of the three steps sequentially.

\subsubsection{The invertibility of $Y_pY_p^\top$}

For each sample trajectory $i$, $i = 1,2,\cdots,N$, the past outputs can be expressed as
\begin{equation}\label{the past outputs}
	y_p^{(i)} = \Gamma_T \hat{x}_0^{(i)} + \mathcal{J}_T e_p^{(i)}.
\end{equation}
By stacking all trajectories, we obtain
\begin{equation}\label{Y_p}
	Y_p =  \Gamma_T \widehat{X} + \mathcal{J}_T E_p.
\end{equation}

\begin{lemma}
	Let $Y_p$ be as in \eqref{Y_p}. For a failure probability $\delta \in (0,1)$, let $N \geq N_1$, then with probability at least $1-\delta$, the following holds:
	\begin{equation}
		\mathcal{E}_1 = \left\{
		Y_pY_p^\top \succeq \frac{N\lambda_{\min}(\mathcal{R})}{2}  \mathbf{I}_{Tm} \succ \mathbf{0}
		\right\}
	\end{equation}
	where $N_1 = (6+4\sqrt{2})(\sqrt{Tm} + \sqrt{2\log(1/\delta)})^2$.
\end{lemma}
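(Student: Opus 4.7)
The plan is to express the columns of $Y_p$ as independent and identically distributed Gaussian vectors in $\mathbb{R}^{Tm}$, lower bound their common covariance matrix by $\lambda_{\min}(\mathcal{R})\mathbf{I}_{Tm}$, and then invoke Corollary~\ref{two-sided bound} to transfer the population lower bound to the sample covariance $Y_p Y_p^\top$. The threshold $N_1 = (6+4\sqrt{2})(\sqrt{Tm}+\sqrt{2\log(1/\delta)})^2$ is exactly the one required by Corollary~\ref{two-sided bound} applied in dimension $Tm$, which explains the form of the hypothesis.

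First I would observe that the $N$ sample trajectories are independent, so the columns $\{y_p^{(i)}\}_{i=1}^N$ of $Y_p$ are i.i.d.\ Gaussian vectors. From \eqref{the past outputs}, together with Assumption~\ref{has converged to its steady state} (so that $\hat{x}_0^{(i)}\sim\mathcal N(\mathbf 0,P)$) and the independence of $\hat{x}_0^{(i)}$ from $e_p^{(i)}$, each column has mean zero and a common covariance
\begin{equation}\notag
\Sigma_{y_p} \;=\; \Gamma_T P \Gamma_T^\top + \mathcal J_T(\mathbf I_T\otimes\mathcal S)\mathcal J_T^\top.
\end{equation}
Thus $Y_p$ fits the model required by Corollary~\ref{two-sided bound} with $\Sigma_x = \Sigma_{y_p}$ and dimension $n \leftarrow Tm$.

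Next I would lower bound $\lambda_{\min}(\Sigma_{y_p})$. The cleanest route is to bypass the innovation form and use the original representation \eqref{linear_system}: the measurement equation $y_k = C x_k + v_k$ exhibits $v_k$ as independent of $x_k$ with $\mathrm{Cov}(v_k)=\mathcal R$, and the $v_k$'s are independent across time. Stacking gives $y_p^{(i)} = (\text{terms in }x_k,w_k) + v_p^{(i)}$, where the two summands are independent and $\mathrm{Cov}(v_p^{(i)}) = \mathbf I_T\otimes\mathcal R$. Since covariance decomposes additively under independence and the first term contributes a positive semidefinite matrix,
\begin{equation}\notag
\Sigma_{y_p}\;\succeq\; \mathbf I_T\otimes\mathcal R \;\succeq\; \lambda_{\min}(\mathcal R)\,\mathbf I_{Tm}.
\end{equation}

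Finally, under the hypothesis $N\geq N_1$, applying \eqref{lower bound for gaussian matrix} of Corollary~\ref{two-sided bound} in dimension $Tm$ yields, with probability at least $1-\delta$,
\begin{equation}\notag
\lambda_{\min}(Y_p Y_p^\top)\;\geq\;\tfrac{N}{2}\lambda_{\min}(\Sigma_{y_p})\;\geq\;\tfrac{N\lambda_{\min}(\mathcal R)}{2},
\end{equation}
which is exactly the event $\mathcal E_1$. The only step that requires minor care is the covariance lower bound: one must resist the temptation to work inside the innovation form (where it is awkward to control $\mathcal J_T(\mathbf I_T\otimes\mathcal S)\mathcal J_T^\top$ from below uniformly) and instead exploit the independence of the additive measurement noise $v_k$ in the original system; this is where Assumption~\ref{observable and controllable}'s requirement $\mathcal R\succ\mathbf 0$ is used decisively. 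Everything else is a direct application of the previously established lemmas.
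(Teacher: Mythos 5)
Your proposal is correct and follows essentially the same route as the paper: the paper likewise identifies the columns of $Y_p$ as i.i.d.\ zero-mean Gaussians, switches to the original representation $y_p^{(i)} = \Gamma_T x_0^{(i)} + \mathcal{M}_T w_p^{(i)} + v_p^{(i)}$ to obtain $\Sigma_y \succeq \mathbf{I}_T \otimes \mathcal{R}$ from the independent measurement noise, and then applies Corollary~\ref{two-sided bound} in dimension $Tm$. Your remark about avoiding the innovation-form covariance and exploiting $\mathcal{R}\succ\mathbf{0}$ is exactly the step the paper takes.
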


\begin{proof}
	Since $\hat{x}_0^{(i)}$ and the past noises $e_p^{(i)}$ are independent, the random variables $ y_p^{(i)}$ are i.i.d. Gaussian with mean zero and covariance matrix $ \Sigma_y = \Gamma_T P \Gamma_T^\top + \mathcal{J}_T (I_T \otimes \mathcal{S}) \mathcal{J}_T^\top$.
    
    On the other hand, from the system dynamics in the~\eqref{linear_system}, it follows that for each trajectory $i = 1, 2, \ldots, N$, the past outputs can also be expressed as
    \begin{equation}
	y_p^{(i)} = \Gamma_T x_0^{(i)} + \mathcal{M}_T w_p^{(i)} + v_p^{(i)},
    \end{equation}
    where the past process noise $w_p^{(i)}$ and the past measurement noise $v_p^{(i)}$ are given by
    \begin{equation}
	   w_p^{(i)} = \begin{bmatrix}
		w_0^{(i)} \\ \vdots \\ w_{T-1}^{(i)}
	\end{bmatrix}, \quad
	   v_p^{(i)} = \begin{bmatrix}
		v_0^{(i)} \\ \vdots \\ v_{T-1}^{(i)}
	\end{bmatrix}.
    \end{equation}
    and 
    $
    \mathcal{M}_T \triangleq \begin{bmatrix}
		\mathbf{I}_m & \\
		C & \mathbf{I}_m & \\
		\vdots & \vdots & \ddots & \\
		CA^{T-2} & CA^{T-3} & \cdots & \mathbf{I}_m
	\end{bmatrix}
    $.
    Since $x_0^{(i)}$, $w_p^{(i)}$, and $v_p^{(i)}$ are mutually independent, it follows that
    \begin{equation}
        \Sigma_y \succeq \mathrm{Cov}[v_p^{(i)}] = \mathbf{I}_T \otimes \mathcal{R}.
    \end{equation}
    Consequently, we obtain the lower bound
    \begin{equation}
        \lambda_{\min}(\Sigma_y) \geq \lambda_{\min}(\mathbf{I}_T \otimes \mathcal{R}) = \lambda_{\min}(\mathcal{R}) > 0.
    \end{equation}

    From the definition of $Y_p$, we have $Y_pY_p^\top  = \sum_{i=1}^{N} y_p^{(i)} (y_p^{(i)})^\top$. Applying Corollary~\ref{two-sided bound}, for $N \geq N_1$, with probability at least $1-\delta$, we obtain
    \begin{equation}
		\mathcal{E}_1 = \left\{
		Y_pY_p^\top \succeq \frac{N\lambda_{\min}(\mathcal{R})}{2}  \mathbf{I}_{Tm} \succ \mathbf{0}
		\right\}.
	\end{equation}
\end{proof}

\subsubsection{The cross-term error}

To bound the cross-term error, we need to derive an upper bound for $\|E_f Y_p^\top \| $. The product $E_f Y_p^\top$ can be written as 
\begin{multline}
    E_f Y_p^\top = \sum_{i=1}^N e_f^{(i)} (y_p^{(i)})^\top \\
    = \sum_{i=1}^N e_f^{(i)} (\hat{x}_0^{(i)})^\top \Gamma_T^\top +  \sum_{i=1}^N e_f^{(i)}(e_p^{(i)})^\top \mathcal{J}_T^\top,
\end{multline}
where the second equality follows from \eqref{the past outputs}. We then obtain the following bound: 
\begin{equation}\label{mid_cross}
	\| E_f Y_p^\top \| \leq \left\| \sum_{i=1}^N e_f^{(i)} (\hat{x}_0^{(i)})^\top \right\| \| \Gamma_T \| + \left\| \sum_{i=1}^N e_f^{(i)}  (e_p^{(i)})^\top  \right\| \| \mathcal{J}_T \|.
\end{equation}

\begin{lemma}
	Let $Y_p$ be as in \eqref{Y_p} and $E_f$ as defined in Section~\ref{sec:sim}. For a failure probability $\delta \in (0,1)$, if $N \geq N_2$, then with probability at least $1-2\delta$, the following holds:
	\begin{multline}
		\mathcal{E}_2 = \left\{
		\| E_f Y_p^\top \| \leq 4 \sqrt{2TmN\log (9/\delta)} \|\mathcal{S}\|^{1/2}\right.  \\
		\left. \cdot
        \left( \|P\|^{1/2} \overline{c} \sqrt{Tmn}
		+ \| \mathcal{S}\|^{1/2} Tm(1+\overline{c}\overline{k}n) \right)
		\right\}
	\end{multline}
	where $N_2 = Tm\log (9/\delta)$.
\end{lemma}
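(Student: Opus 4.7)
The plan is to start from the triangle-inequality decomposition already stated in~\eqref{mid_cross} and bound each of the two random sums separately using Lemma~\ref{xw}, combining them with the deterministic bounds on $\|\Gamma_T\|$ and $\|\mathcal{J}_T\|$ provided by Lemma~\ref{upper bound of Gamma and J}. Concretely, I would first observe that under Assumption~\ref{has converged to its steady state} the two samples $e_f^{(i)}$ and $\hat{x}_0^{(i)}$ are independent (the future innovations of the Kalman filter are orthogonal/independent of the state at time $0$), and similarly $e_f^{(i)}$ and $e_p^{(i)}$ are independent across the past/future split (innovations at distinct times are independent zero-mean Gaussians). This independence is the essential structural fact that allows Lemma~\ref{xw} to be invoked cleanly.

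Next, for the first sum $\sum_{i=1}^N e_f^{(i)}(\hat{x}_0^{(i)})^\top$, I would identify $x_i \leftarrow e_f^{(i)} \in \mathbb{R}^{Tm}$ with covariance $\mathbf{I}_T \otimes \mathcal{S}$ (so $\|\Sigma_x\|^{1/2} = \|\mathcal{S}\|^{1/2}$) and $w_i \leftarrow \hat{x}_0^{(i)} \in \mathbb{R}^n$ with covariance $P$ (so $\|\Sigma_w\|^{1/2} = \|P\|^{1/2}$). Lemma~\ref{xw} then delivers, with probability at least $1-\delta$, the bound $4\|\mathcal{S}\|^{1/2}\|P\|^{1/2}\sqrt{N(Tm+n)\log(9/\delta)}$, provided $N \geq \tfrac{1}{2}(Tm+n)\log(9/\delta)$. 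For the second sum $\sum_{i=1}^N e_f^{(i)}(e_p^{(i)})^\top$, both factors are $Tm$-dimensional with covariance $\mathbf{I}_T \otimes \mathcal{S}$, so Lemma~\ref{xw} gives, with probability at least $1-\delta$, the bound $4\|\mathcal{S}\|\sqrt{2NTm\log(9/\delta)}$, provided $N \geq Tm\log(9/\delta) = N_2$. This second requirement is the binding one, which is why $N_2$ takes its stated form.

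Combining these with the deterministic bounds $\|\Gamma_T\| \leq \overline{c}\sqrt{Tmn}$ and $\|\mathcal{J}_T\| \leq Tm(1+\overline{c}\overline{k}n)$ from Lemma~\ref{upper bound of Gamma and J}, and taking a union bound (yielding failure probability at most $2\delta$), I obtain
\begin{multline*}
\|E_f Y_p^\top\| \leq 4\|\mathcal{S}\|^{1/2}\|P\|^{1/2}\overline{c}\sqrt{Tmn}\,\sqrt{N(Tm+n)\log(9/\delta)} \\
+ 4\|\mathcal{S}\|\,Tm(1+\overline{c}\overline{k}n)\sqrt{2NTm\log(9/\delta)}.
\end{multline*}
The final step is a cosmetic simplification: since $T \geq n$ and $m \geq 1$ imply $Tm + n \leq 2Tm$, the factor $\sqrt{N(Tm+n)\log(9/\delta)}$ in the first term is upper-bounded by $\sqrt{2NTm\log(9/\delta)}$, allowing both terms to be written with the common prefactor $4\sqrt{2TmN\log(9/\delta)}\|\mathcal{S}\|^{1/2}$ and yielding exactly the stated expression for $\mathcal{E}_2$.

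The only genuinely delicate point is the independence argument needed to invoke Lemma~\ref{xw} on $\sum e_f^{(i)}(\hat{x}_0^{(i)})^\top$; everything else is an application of tools already established. The rest is bookkeeping: correctly identifying the spectral norms of the Kronecker-structured covariances (so that the $T$-factor enters only through the dimension of the Gaussian vectors, not through the norm), applying the probabilistic sample-complexity thresholds from Lemma~\ref{xw} to determine which condition dominates, and performing the final $Tm+n \leq 2Tm$ relaxation.
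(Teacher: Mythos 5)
Your proposal is correct and follows essentially the same route as the paper's own (much terser) proof: decompose via~\eqref{mid_cross}, apply Lemma~\ref{xw} to each of the two random sums using the independence of $e_f^{(i)}$ from $\hat{x}_0^{(i)}$ and from $e_p^{(i)}$, combine with the deterministic bounds of Lemma~\ref{upper bound of Gamma and J}, and absorb the $(Tm+n)$ factor via $n \leq Tm$ (the paper's ``$Tm \geq nm \geq n$''). Your accounting of the sample-size thresholds and the union bound matches the stated $N_2$ and failure probability exactly.
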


\begin{proof}
    Since $\hat{x}_0^{(i)}$ and the future noises $e_f^{(i)}$ are independent, and the future noises $e_f^{(i)}$ are independent of the past noises $e_p^{(i)}$, applying Lemma~\ref{xw} and Lemma~\ref{upper bound of Gamma and J} to~\eqref{mid_cross}, while noting that $Tm \geq nm \geq n$, completes the proof.
\end{proof}

\subsubsection{The truncation bias}

To bound the truncation bias, we need to derive an upper bound for $\|\widehat{X} Y_p^\top \| $. The product $\widehat{X} Y_p^\top$ can be written as
\begin{multline}
	\widehat{X} Y_p^\top = \sum_{i=1}^N \hat{x}_0^{(i)} (y_p^{(i)})^\top \\
    = \sum_{i=1}^N \hat{x}_0^{(i)} (\hat{x}_0^{(i)})^\top \Gamma_T^\top +  \sum_{i=1}^N \hat{x}_0^{(i)} (e_p^{(i)})^\top \mathcal{J}_T^\top,
\end{multline}
where the second equality follows from \eqref{the past outputs}. We then obtain the following bound:
\begin{equation}\label{mid_bias}
	\| \widehat{X} Y_p^\top \| \leq \left\| \sum_{i=1}^N \hat{x}_0^{(i)} (\hat{x}_0^{(i)})^\top \right\| \| \Gamma_T \| + \left\| \sum_{i=1}^N \hat{x}_0^{(i)} (e_p^{(i)})^\top  \right\| \| \mathcal{J}_T \|.
\end{equation}

\begin{lemma}
	Let $Y_p$ be as in \eqref{Y_p} and $E_f$ be as in \eqref{the state}. For a failure probability $\delta \in (0,1)$, if $N \geq N_3$, then with probability at least $1-2\delta$, the following holds:
	\begin{multline}
		\mathcal{E}_3 = \left\{
		\| \widehat{X} Y_p^\top \| \leq \sqrt{3}N \| P\| \overline{c} \sqrt{Tmn} \right.\\ \left.
        + 4\|P\|^{1/2}\| \mathcal{S}\|^{1/2} \sqrt{2TmN\log(9/\delta)} Tm(1+\overline{c}\overline{k}n)
		\right\},
	\end{multline}
	where $N_3 = \max\{(6+4\sqrt{2})(\sqrt{n} + \sqrt{2\log(1/\delta)})^2,N_2\}$.
\end{lemma}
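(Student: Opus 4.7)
The plan is to bound the two terms on the right-hand side of the decomposition in~\eqref{mid_bias} separately and then combine them via a union bound together with the deterministic bounds on $\|\Gamma_T\|$ and $\|\mathcal{J}_T\|$ from Lemma~\ref{upper bound of Gamma and J}.

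For the first term, $\bigl\|\sum_{i=1}^N \hat{x}_0^{(i)}(\hat{x}_0^{(i)})^\top\bigr\|$, I would invoke Assumption~\ref{has converged to its steady state}, which guarantees that $\hat{x}_0^{(i)} \sim \mathcal{N}(\mathbf{0},P)$ are i.i.d. across trajectories. Applying the upper-bound part~\eqref{upper bound for gaussian matrix} of Corollary~\ref{two-sided bound} with $\Sigma_x = P$ and dimension $n$, whenever $N \geq (6+4\sqrt{2})(\sqrt{n}+\sqrt{2\log(1/\delta)})^2$, we obtain with probability at least $1-\delta$ the bound $\bigl\|\sum_{i=1}^N \hat{x}_0^{(i)}(\hat{x}_0^{(i)})^\top\bigr\| \le \sqrt{3}N\|P\|$. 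Multiplying by $\|\Gamma_T\| \le \overline{c}\sqrt{Tmn}$ yields the first summand in $\mathcal{E}_3$.

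For the second term, $\bigl\|\sum_{i=1}^N \hat{x}_0^{(i)}(e_p^{(i)})^\top\bigr\|$, the key observation is that in the steady-state Kalman filter the innovations $\{e_0,\ldots,e_{T-1}\}$ are independent of $\hat{x}_0$ (they constitute the orthogonal increments generated after time $0$), so $\hat{x}_0^{(i)}$ and $e_p^{(i)}$ are independent Gaussian vectors of dimensions $n$ and $Tm$ with covariances $P$ and $I_T \otimes \mathcal{S}$ respectively. Applying Lemma~\ref{xw} with these covariances, whenever $N \geq \tfrac12(n+Tm)\log(9/\delta)$, which is satisfied once $N \ge N_2 = Tm\log(9/\delta)$ since $n \le Tm$ (because $T\ge n$ and $m\ge 1$), we get with probability at least $1-\delta$ the bound $4\|P\|^{1/2}\|\mathcal{S}\|^{1/2}\sqrt{N(n+Tm)\log(9/\delta)} \le 4\|P\|^{1/2}\|\mathcal{S}\|^{1/2}\sqrt{2TmN\log(9/\delta)}$, again using $n \le Tm$. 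Multiplying by $\|\mathcal{J}_T\| \le Tm(1+\overline{c}\overline{k}n)$ yields the second summand.

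Finally, a union bound over the two high-probability events, combined with the requirement that $N \ge N_3 = \max\{(6+4\sqrt{2})(\sqrt{n}+\sqrt{2\log(1/\delta)})^2,\,N_2\}$ to ensure both sample-size thresholds are met, delivers $\mathcal{E}_3$ with probability at least $1-2\delta$. The only non-routine step is articulating the independence of $\hat{x}_0^{(i)}$ and $e_p^{(i)}$ in the steady-state innovation form, which I expect to be the main conceptual point to justify; once this is noted, the rest is a direct application of Corollary~\ref{two-sided bound}, Lemma~\ref{xw}, and Lemma~\ref{upper bound of Gamma and J}, mirroring the structure used in the previous two lemmas of this appendix.
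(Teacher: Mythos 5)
Your proof is correct and follows essentially the same route as the paper: the paper's own argument is precisely to apply the decomposition~\eqref{mid_bias}, bound the first term via Corollary~\ref{two-sided bound} with $\Sigma_x = P$ and the second via Lemma~\ref{xw} (using independence of $\hat{x}_0^{(i)}$ and $e_p^{(i)}$), multiply by the norms from Lemma~\ref{upper bound of Gamma and J}, and take a union bound. Your write-up is in fact more explicit than the paper's one-line proof, particularly in justifying the independence of $\hat{x}_0^{(i)}$ and the past innovations and in tracking the sample-size thresholds.
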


\begin{proof}
    Since $\hat{x}_0^{(i)}$ and the past noises $e_p^{(i)}$ are independent, applying Corollary~\ref{two-sided bound}, Lemma~\ref{xw} and Lemma~\ref{upper bound of Gamma and J} to~\eqref{mid_bias} completes the proof.
\end{proof}

\subsubsection{Final proof}

\begin{proof}
    From~\eqref{error_1}, the estimation error $\|\widehat{\mathcal{H}}_T - \mathcal{H}_T\|$ can be bounded as follows:
\begin{multline}
	\|\widehat{\mathcal{H}}_T - \mathcal{H}_T\| \\
    \leq \|  (Y_pY_p^\top)^{-1}\| \left(
	\|\mathcal{J}_T\| \|E_f Y_p^\top\| +  \|\Gamma_T A_C^T \| \|\widehat{X} Y_p^\top\|
	\right).
\end{multline}
Let $N \geq \max\{N_1,N_2,N_3\} = (6+4\sqrt{2})(\sqrt{Tm} + \sqrt{2\log(1/\delta)})^2$. By applying the union bound to the estimates in the Lemmas with probability at least $1-5\delta$, the event $\mathcal{E}_1 \cap \mathcal{E}_2 \cap \mathcal{E}_3$ occurs. Therefore, we obtain the following bound:
\begin{equation}
	\|\widehat{\mathcal{H}}_T - \mathcal{H}_T\| \leq \frac{2}{ \lambda_{\min}(\mathcal{R})}  \left( \mathcal{C}_1 \mathcal{C}_2 + \sqrt{N}\|A_C^T\| \mathcal{C}_3 \right) \sqrt{\frac{T^5}{N}}.
\end{equation}
\end{proof}

\subsection{The proof of Lemma~\ref{the robustness of balanced realizations}}

\begin{proof}
	For the block Hankel matrix $\mathcal{H}_T$ and its best rank-$n$ approximation $\widehat{\mathcal{H}}_{T,n}$ of the estimated block Hankel matrix $\widehat{\mathcal{H}}_T$, one has
	\begin{equation}\label{mid_111}
		\| \mathcal{H}_T - \widehat{\mathcal{H}}_{T,n} \| \leq 	\| \mathcal{H}_T - \widehat{\mathcal{H}}_T \|  + \| \widehat{\mathcal{H}}_T - \widehat{\mathcal{H}}_{T,n} \| \leq 2\| \mathcal{H}_T - \widehat{\mathcal{H}}_T \|,
	\end{equation}
	where the second inequality holds because both $\mathcal{H}_T$ and $\widehat{\mathcal{H}}_{T,n}$ have rank $n$, and $\widehat{\mathcal{H}}_{T,n}$ is the best rank-$n$ approximation of $\widehat{\mathcal{H}}_T$, yielding $\| \widehat{\mathcal{H}}_T - \widehat{\mathcal{H}}_{T,n} \| \leq \| \mathcal{H}_T - \widehat{\mathcal{H}}_T \|$. Together with the perturbation condition~\eqref{perturbation condition}, this implies
	\begin{equation}
		\| \mathcal{H}_T - \widehat{\mathcal{H}}_{T,n} \|  \leq 2\| \mathcal{H}_T - \widehat{\mathcal{H}}_T \| \leq \frac{\sigma_{n}(\mathcal{H}_T)}{2}.
	\end{equation}
	Then, applying Theorem 5.14 of~\cite{tu2016lowranksolutionslinearmatrix}, there exists a unitary matrix $\mathcal{U} \in \mathbb{R}^{n \times n}$ such that
	\begin{equation}\label{mid_112}
		\| \widehat{\Gamma}_T - \overline{\Gamma}_T \mathcal{U} \|_{\rm F}^2 + \| \widehat{\mathcal{K}}_T - \mathcal{U}^\top \overline{\mathcal{K}}_T \|_{\rm F}^2 \leq \frac{2}{\sqrt{2}-1} \frac{\| \mathcal{H}_T - \widehat{\mathcal{H}}_{T,n} \|_{\rm F}^2}{\sigma_{n}(\mathcal{H}_T)}.
	\end{equation}
	Note that both $\mathcal{H}_T$ and $\widehat{\mathcal{H}}_{T,n}$ have rank $n$, thus $\mathcal{H}_T - \widehat{\mathcal{H}}_{T,n}$ has rank at most $2n$. Hence,
	\begin{equation}\label{mid_113}
		\| \mathcal{H}_T - \widehat{\mathcal{H}}_{T,n} \|_{\rm F} \leq \sqrt{2n} \| \mathcal{H}_T - \widehat{\mathcal{H}}_{T,n} \| \leq 2 \sqrt{2n} \| \mathcal{H}_T - \widehat{\mathcal{H}}_T \|,
	\end{equation}
	where the second inequality follows from~\eqref{mid_111}. 
    Since the spectral norm is bounded by the Frobenius norm, combining the above result with \eqref{mid_112}, we obtain
	\begin{equation}
		\| \widehat{\Gamma}_T - \overline{\Gamma}_T \mathcal{U} \|^2 + \| \widehat{\mathcal{K}}_T - \mathcal{U}^\top \overline{\mathcal{K}}_T \|^2 \leq \frac{16n}{\sqrt{2}-1} \frac{\| \mathcal{H}_T - \widehat{\mathcal{H}}_{T} \|^2}{\sigma_{n}(\mathcal{H}_T)}.
	\end{equation}
	Since $\frac{16}{\sqrt{2}-1} < 39 $, this further implies
	\begin{multline}
		\max\left\{
		\| \widehat{\Gamma}_T - \overline{\Gamma}_T \mathcal{U} \|, \| \widehat{\mathcal{K}}_T - \mathcal{U}^\top \overline{\mathcal{K}}_T \|
		\right\} \\
        \leq \sqrt{\frac{39n}{\sigma_{n}(\mathcal{H}_T)}} \| \mathcal{H}_T - \widehat{\mathcal{H}}_{T} \|.
	\end{multline}
	
	Since $\widehat{C} - \overline{C} \mathcal{U}$ is the first $m$ rows of $\widehat{\Gamma}_T - \overline{\Gamma}_T \mathcal{U} $, it immediately follows that
	\begin{equation}
		\| \widehat{C} - \overline{C} \mathcal{U} \| \leq \| \widehat{\Gamma}_T - \overline{\Gamma}_T \mathcal{U} \| \leq  \sqrt{\frac{39n}{\sigma_{n}(\mathcal{H}_T)}} \| \mathcal{H}_T - \widehat{\mathcal{H}}_{T} \|.
	\end{equation}
	Similarly, as $\widehat{K} -\mathcal{U}^\top \overline{K}$ is the last $m$ columns of $\widehat{\mathcal{K}}_T - \mathcal{U} ^\top \overline{\mathcal{K}}_T$, one has
	\begin{equation}
		\| \widehat{K} -\mathcal{U}^\top \overline{K} \| \leq \| \widehat{\mathcal{K}}_T - \mathcal{U} ^\top \overline{\mathcal{K}}_T \| \leq \sqrt{\frac{39n}{\sigma_{n}(\mathcal{H}_T)}} \| \mathcal{H}_T - \widehat{\mathcal{H}}_{T} \|.
	\end{equation}
	
	We now focus on $\widehat{A} - \mathcal{U}^\top \overline{A} \mathcal{U}$. Note that $\overline{A} = \overline{\Gamma}_{T,p}^\dagger \overline{\Gamma}_{T,f}$ and
	$\widehat{A} = \widehat{\Gamma}_{T,p}^\dagger \widehat{\Gamma}_{T,f}$, where $\overline \Gamma_{T,p}$ and $\overline \Gamma_{T,f}$ denote the matrices obtained from $\overline \Gamma_T$ by removing its last $m$ rows and first $m$ rows, respectively, and similarly for $\widehat \Gamma_{T,p}$ and $\widehat \Gamma_{T,f}$.
    Thus,
	\begin{multline}
		\widehat{A} - \mathcal{U}^\top \overline{A} \mathcal{U} = \widehat{\Gamma}_{T,p}^\dagger \widehat{\Gamma}_{T,f} - \mathcal{U}^\top \overline{\Gamma}_{T,p}^\dagger \overline{\Gamma}_{T,f} \mathcal{U} \\
		= \left( \widehat{\Gamma}_{T,p}^\dagger -  \mathcal{U}^\top \overline{\Gamma}_{T,p}^\dagger \right)  \overline{\Gamma}_{T,f} \mathcal{U}  +  \widehat{\Gamma}_{T,p}^\dagger \left( \widehat{\Gamma}_{T,f} - \overline{\Gamma}_{T,f} \mathcal{U}\right).
	\end{multline}
	It follows that
    \begin{multline}\label{mid_210}
        \| \widehat{A} - \mathcal{U}^\top \overline{A} \mathcal{U} \| \\
        \leq \| \widehat{\Gamma}_{T,p}^\dagger -  \mathcal{U}^\top \overline{\Gamma}_{T,p}^\dagger \| \cdot \|  \overline{\Gamma}_{T,f} \mathcal{U}\|  +\|  \widehat{\Gamma}_{T,p}^\dagger \| \cdot \| \widehat{\Gamma}_{T,f} - \overline{\Gamma}_{T,f} \mathcal{U}\| \\
			= \| \widehat{\Gamma}_{T,p}^\dagger -  \mathcal{U}^\top \overline{\Gamma}_{T,p}^\dagger \| \cdot \|  \overline{\Gamma}_{T,f}\|  +\|  \widehat{\Gamma}_{T,p}^\dagger \| \cdot \| \widehat{\Gamma}_{T,f} - \overline{\Gamma}_{T,f} \mathcal{U}\|,
    \end{multline}
    where the final equality uses the fact that $\mathcal U$ is a unitary matrix.

   First, since $\widehat{\Gamma}_{T,p}^\dagger $ and $ \mathcal{U}^\top \overline{\Gamma}_{T,p}^\dagger$ have the same rank $n$, applying Theorem 4.1 of~\cite{wedin1973perturbation} yields
   \begin{equation}\label{mid_211}
      \| \widehat{\Gamma}_{T,p}^\dagger -  \mathcal{U}^\top \overline{\Gamma}_{T,p}^\dagger \| \leq  \frac{\sqrt{2} \| \widehat{\Gamma}_{T,p} -  \mathcal{U}^\top \overline{\Gamma}_{T,p}\| }{\sigma_{n}(\widehat{\Gamma}_{T,p}) \sigma_{n}(\overline{\Gamma}_{T,p})}.
   \end{equation}
	Moreover, by the Cauchy interlacing theorem~\cite{hwang2004cauchy},
    \begin{equation}\label{mid_212_1}
            \sigma_{n}(\overline{\Gamma}_{T}) \leq \sigma_{n}(\overline{\Gamma}_{T,p}), \quad
            \sigma_{n}(\widehat{\Gamma}_{T}) \leq \sigma_{n}(\widehat{\Gamma}_{T,p}).
    \end{equation}
    On the other hand,
    \begin{equation}\label{mid_212_2}
        \sigma_{n}(\overline{\Gamma}_{T,p}) = \sqrt{\sigma_{n}(\mathcal{H}_T)}, \quad
        \sigma_{n}(\widehat{\Gamma}_{T,p}) = \sqrt{\sigma_{n}(\widehat{\mathcal{H}}_T)}.
    \end{equation}
	By combining \eqref{mid_211}, \eqref{mid_212_1}, \eqref{mid_212_2} and Lemma~\ref{sigma}, we can obtain that
    \begin{equation}\label{mid_213}
        \| \widehat{\Gamma}_{T,p}^\dagger -  \mathcal{U}^\top \overline{\Gamma}_{T,p}^\dagger \| 
        \leq \frac{2 \| \widehat{\Gamma}_{T,p} -  \mathcal{U}^\top \overline{\Gamma}_{T,p}\| }{\sigma_{n}(\mathcal{H}_T)}.
    \end{equation}
    
	Second, since $\overline \Gamma_{T,f}$ is formed from $\overline \Gamma_T$ by removing its first $m$ rows, it satisfies
	\begin{equation}\label{mid_214}
		\| \overline{\Gamma}_{T,f} \| \leq \| \overline{\Gamma}_T \| = \| \mathcal{H}_T \|^{1/2} \leq \sqrt{\overline{c}\overline{k}Tmn},
	\end{equation}
    where the last inequality follows from Lemma~\ref{upper bound of Gamma and J}.
	
	Third, it is not difficult to verify that
	\begin{multline}\label{mid_215}
		\max \left\{
		\| \widehat{\Gamma}_{T,p} -  \mathcal{U}^\top \overline{\Gamma}_{T,p}\| , \| \widehat{\Gamma}_{T,f} - \overline{\Gamma}_{T,f} \mathcal{U}\| 
		\right\} \leq 	
		\| \widehat{\Gamma}_T - \overline{\Gamma}_T \mathcal{U} \| \\
		\leq \sqrt{\frac{39n}{\sigma_{n}(\mathcal{H}_T)}} \| \mathcal{H}_T - \widehat{\mathcal{H}}_{T} \|.
	\end{multline}
    
	Finally, by substituting \eqref{mid_212_1}-\eqref{mid_215} into \eqref{mid_210}, we obtain
	\begin{equation}
		\| \widehat{A} - \mathcal{U}^\top \overline{A} \mathcal{U} \|  \leq \left(\sqrt{2} + 2\sqrt{
        \frac{\overline{c}\overline{k}Tmn}{\sigma_{n}(\mathcal{H}_T)}
        } \right) \frac{\sqrt{39n}}{\sigma_{n}(\mathcal{H}_T)}\|\mathcal{H}_T - \widehat{\mathcal{H}}_{T} \|.
	\end{equation}
\end{proof}

\subsection{The proof of Lemma~\ref{ill-conditioned}}

\begin{proof}
    Since the pair $(A, C)$ is observable and $T \ge n$, the matrix $\Gamma_T$ has full column rank $n$. Noting that $\mathcal H_T = \Gamma_T \mathcal K_T$, we have
    \begin{multline}
        \sigma_{n}(\mathcal{H}_T)  \leq \sigma_{n}(\Gamma_T) \| \mathcal{K}_T \| \leq 4\rho^{-\frac{\left\lfloor \frac{n-1}{2m} \right\rfloor}{\log (2mT)} }\|\Gamma_T\| \cdot\| \mathcal{K}_T \| \\
        \leq 4\overline{c}\overline{k} Tmn \rho^{-\frac{\left\lfloor \frac{n-1}{2m} \right\rfloor}{\log (2mT)} },
    \end{multline}
    where the second inequality follows from Lemma~\ref{krylov matrix}, and the final one is due to Lemma~\ref{upper bound of Gamma and J}.
\end{proof}

\subsection{The proof of Theorem~\ref{the system poles}}

\begin{proof}
    According to Theorem~\ref{end-to-end}, there exists a unitary matrix $\mathcal{U} \in \mathbb{R}^{n \times n}$ such that with probability at least $1-6\delta$, $\| \mathcal{U}^\top \widehat{A}\mathcal{U} - \overline{A} \| 
	\leq \Delta$.     
    On the other hand, according to the triangle inequality, we obtain
	\[
		\|\mathcal{U}^\top \widehat{A}\mathcal{U} \|  \leq \Delta + \| \overline{A}\|.
	\]
	Based on Corollary~\ref{hausdorff_distance}, it can be obtained that 
	\begin{multline}\notag
			d_{\rm H}(\widehat{A},\overline{A}) = d_{\rm H}(\mathcal{U}^\top \widehat{A}\mathcal{U},  \overline{A})  \\
			\leq  \left( \|\mathcal{U}^\top \widehat{A}\mathcal{U}\| + \|\overline{A}\| \right)^{1-\frac{1}{n}} \| \mathcal{U}^\top \widehat{A}\mathcal{U} - \overline{A}\|^{\frac{1}{n}} \\
            \leq (\Delta + 2\|\overline{A}\|)^{1-\frac{1}{n}} \Delta^{\frac{1}{n}},
	\end{multline}
	where the first equality holds because the unitary matrix transformation does not change the spectrum of the matrix. 
\end{proof}


\bibliographystyle{elsarticle-num}
\bibliography{ref}

\balance

\end{document}